\newtheorem{theorem}{Theorem}
\newtheorem{lemma}{Lemma}
\newtheorem{example}{Example}
\newtheorem{proposition}{Proposition}
\newenvironment{proof}{\begin{trivlist}
                       \item[]{\bf Proof}
                       \hspace{0cm} }{\hfill {\large $\bullet$}
                       \end{trivlist}}
\newcommand{\be}{\begin{eqnarray}}
\newcommand{\ee}{\end{eqnarray}}
\newcommand{\ignore}[1]{}
\newcommand{\qed}{}
\title{$5'\rightarrow 3'$  Watson-Crick Automata accepting Necklaces}
\author{Benedek Nagy
\institute{Department of Mathematics,
Eastern Mediterranean University\\
99628 Famagusta, North Cyprus, Mersin-10, Turkey\\
Department of Computer Science, Institute of Mathematics and Informatics,\\ Eszterh\'azy K\'aroly Catholic University, Eger, Hungary}
\email{nbenedek.inf@gmail.com}
}
\begin{document}

\maketitle

\begin{abstract}
Watson-Crick (WK) finite automata work on a Watson-Crick tape representing a DNA molecule. They have two reading heads. In $5'\rightarrow 3'$  WK automata, the heads move and read the input in opposite physical directions. In this paper, we consider such inputs which are necklaces, i.e., they represent circular DNA molecules. In sensing  $5'\rightarrow 3'$  WK automata, the computation on the input is finished when the heads meet. As the original model is capable of accepting the linear context-free languages, the necklace languages we are investigating here have strong relations to that class.
Here, we use these automata in two different acceptance modes. On the one hand, in \textit{weak} acceptance mode the heads are starting nondeterministically at any point of the input, like the necklace is cut at a nondeterministically chosen point), and if the input is accepted, it is in the accepted necklace language. These languages can be seen as the languages obtained from the linear context-free languages by taking their closure under cyclic shift operation.
 On the other hand,
in \textit{strong} acceptance mode, it is required that the input is accepted starting the heads in the computation from every point of the cycle. These languages can be seen as the maximal cyclic shift closed languages included in a linear language. On the other hand, as  it will be shown, they have a kind of locally testable property.
 We present some hierarchy results based on restricted variants of the WK automata, such as stateless or all-final variants.
\end{abstract}
\textbf{Keywords:} {Watson-Crick automata, $5'\rightarrow 3'$  WK automata, 
languages of circular words, finite state acceptors, hierarchy, bio-inspired computing, weak and strong acceptance}


\section{Introduction}
\label{s:intro}
On the one hand, there are numerous new computational paradigms that emerged in the last decades, usually based on or motivated by some natural phenomena \cite{handbook-of-NC}. A number of them are connected to DNA molecules, thus DNA computing has various theoretical \cite{Paun} and various experimental branches (based e.g., on \cite{Adleman}). 
 Both Watson-Crick automata and the theory/combinatorics of circular words (also called necklaces) are belonging to theoretical DNA motivated models.
 On the other hand, as their names already hint, they have strong connections to classical computing theory, including automata and formal languages.  
Watson-Crick automata (abbreviated by the first and last letters of the names of the Nobel prize winner discoverers of the  DNA molecule structure, i.e., WK automata),  were introduced in \cite{Freund} as an automata type model of DNA computing %
\cite{Sempere04,Czeizle}. These automata are interesting both from theoretical aspects of computations and also from their applicability in bioinformatical problems \cite{Sempere18}.
 The DNA molecules, from a computational point of view, can be seen as linear or circular double stranded words over the alphabet of nucleotides, such that the two strands are related by the Watson-Crick complementarity relation (that is, in nature, a bijective pairing relation on the used 4 nucleotides). 
The original models of WK automata work on double-stranded tapes called Watson-Crick tapes that represent (linear) DNA molecules 
and the two read-only heads scanning the two strands in a correlated manner. These automata are closely related to finite automata having two heads.
 From the biological point of view there are some restrictions that could be applied on the model, e.g., on the number of states or on the number of input letters being read in a transition.
Relationships between various restricted classes of the Watson-Crick automata were presented in \cite{Freund,Paun,Kuske}. 
From another important biological motivation, the reverse and $5'\to 3'$ WK automata make more sense: each (linear) DNA strand has its own $5'$ and $3'$ end, where these names come from the position of the carbon atoms in the sugar part to which the next nucleotide can connect by covalent bond.  
 The two strands of a DNA molecule have opposite chemical direction, i.e., the $5'$ end of a strand gives the $3'$ end of the other and vice versa. Thus, if one believes that in these automata a biochemical sensor, an enzyme, may read the strands, then, most probably, the enzyme reads the two strands in the same chemical direction, i.e., from their $5'$ ends to the direction of their $3'$ ends  \cite{Freund,DNA2008,Leupold-NCMA,Leupold}. While the reverse variant of WK automata is essentially the same as the full-reading non-sensing variant of $5'\to 3'$ WK automata \cite{Paun,Leupold-NCMA}, in the sensing version, the computation on an input finishes at latest when the two heads meet. This sensing was taken into account with a rather artificial sensing parameter in \cite{DNA2008,Nagy2013}, while without it in \cite{AFL,NaCo,UCNC,ActaInf}. In \cite{iConcept} specific both-head stepping variants were defined, where both heads move together and read letter by letter the input (till they meet).   
  We should mention here that sensing $5'\to 3'$ WK automata is closely related to other 2-head finite automata models %
  described under various names
   like linear automata \cite{R.Louk}, biautomata \cite{NCMA-bi} or simply 2-head automata \cite{Triangle}, as their class is capable to accept the class of the linear context-free languages. The specific variant shown in \cite{iConcept} is able to accept the so-called even-linear languages \cite{Amar,Sempere}. Other restricted version, namely $5'\to 3'$ WK automata with exactly one state, was investigated in details in \cite{Eger-N}.
Some extensions of the $5'\to 3'$ WK automata were also developed, e.g., jumping $5'\to 3'$ WK automata \cite{jumpWK},  combination with automata with translucent letters \cite{wtl,WKwtl} or $5'\to 3'$ WK transducers \cite{RAIRO-trans}.
 
 In this paper, the model of $5'\to 3'$ WK automata is used for languages of necklaces, i.e., sets of circular words. As there are circular DNA molecules, it is of particular interest to investigate these automata and analyze their computational power, etc.
 As usual, we are using linearization of necklaces, i.e., we represent a necklace by the set of (linear) words that are obtained as the conjugate class of any of the words that can represent the necklace. We use two modes of acceptance: a necklace is accepted in the weak mode if any of its conjugates is accepted by the given automaton; and those necklaces are accepted in the strong mode for which each of their conjugates are accepted.
  
 In the next section, we formally define our concepts, and then in Sections \ref{w:WK} and \ref{s:WK} we give a sequence of hierarchy results among the accepted classes of necklace languages including all-final, simple, 1-limited, and stateless $5'\rightarrow 3'$  Watson-Crick automata in case of the weak and strong accepting mode for necklaces, respectively. 
 Conclusions and open questions will close the paper. 
 
 Here, we recall only one of the main results for each of the acceptance modes: 
 \begin{itemize}
 \item a language can be weakly accepted by a $5'\to 3'$ WK automaton if and only if it is the cyclic closure of a linear context-free language.
 \item if a language is strongly accepted by a  $5'\to 3'$ WK automaton, then the language has a kind of locally testable property.  
 \end{itemize}
 
%
\section{Preliminaries}
\label{s:pre}
We assume that the reader is familiar with basic concepts of formal languages and automata, otherwise she or he is referred to \cite{HopUl,Handb}. For any unexplained notions about DNA computing we refer, e.g., to \cite{Paun}. We denote the empty word by $\lambda$, and the sets of positive and nonnegative integers by $\mathbb N$ and $\mathbb N_0$, respectively.

Let $T$ be an \textit{alphabet}, then for any word $w \in T^*$ if $w=uv$, then the word $vu$ is a \textit{conjugate} of $w$, and the set of all conjugates of $w$ is called a \textit{necklace} (or cyclic, or circular word) $w_{\circ}$. The operation by which we can obtain each element of the class is called \textit{cyclic shift}, i.e., the cyclic shift of $aw$ is $wa$ where $a\in T$ and $w\in T^*$. The subsequent application of the cyclic shift operation $cycl$ (at most as many times as the length of the word) obtains each conjugate of the word we start with. 
Periodic properties of circular words were studied in \cite{Hegedus,DM}, where a weak period of a circular word was defined as a period of an element of the conjugate class and a period was a strong period if it was a period of each element of the conjugate class.
 In fact, what we are dealing with is the linearization of the circular words. One can imagine those as words written in a cyclic way joining (i.e., concatenating) the first letter of the word after its last letter,  in this way obtaining the word without a starting and without an ending point. Languages of necklaces are also studied in the literature \cite{Manfred}. In this paper, we use necklaces to model (describe) circular DNA molecules. A \textit{language of necklaces} is represented by the union of the necklaces, i.e., conjugate classes. Obviously, this condition can be translated as follows: a language $L\subset T^*$ is a language of necklaces if for any word $w\in L$ each conjugate of $w=uv$ is also in $L$, i.e. $vu\in L$. Consequently, necklace languages are exactly those languages that are closed under cyclic shift operation, when we apply the operation $cycl$ for a language as follows: $cycl(L) = \{uv ~|~ u,v\in T^*, w=vu\in L\} =\bigcup\limits_{w\in L} w_{\circ}$. Further, the cyclic closure of a class $\mathcal{L}$ of languages is the class of the cyclic closures of the languages in $\mathcal{L} $. 

One class of the Chomsky hierarchy, the class of {linear context-free languages}, has a strong connection to the automata model we start with, thus we recall it briefly.
A generative grammar $G=(N,T,S,P)$ is \textit{linear context-free} if every production is context-free and contains at most one nonterminal on the right hand side, i.e., it is one of the forms $A\to u, \ A\to uBv$ with $A,B\in N$ and $u,v\in T^{*}$. A language $L$ is linear context-free if it can be generated by a linear context-free grammar.
This class of languages is denoted by $\mathcal{L}_{LIN}$.
 It is known that, on the one hand, the classes of regular and context-free languages (denoted by $\mathcal{L}_{context-free}$) are closed under cyclic shift,  
 on the other hand, the class of linear context-free languages is not \cite{genCycl,HopUl}. %

The two strands of the DNA molecule have opposite $5'\rightarrow3'$ orientations. Therefore,  Watson-Crick finite automata that parse the two strands of the Watson-Crick tape in opposite directions are investigated. Now, we use them to accept necklace languages.
Figure \ref{sd-init} indicates the initial configuration of such an automaton. As there is no specific start and end point of a necklace, the starting point can be chosen arbitrarily (and based on that we will define two types of acceptance conditions).

\begin{figure}[htb]
    \centering
        \includegraphics[scale=0.75]{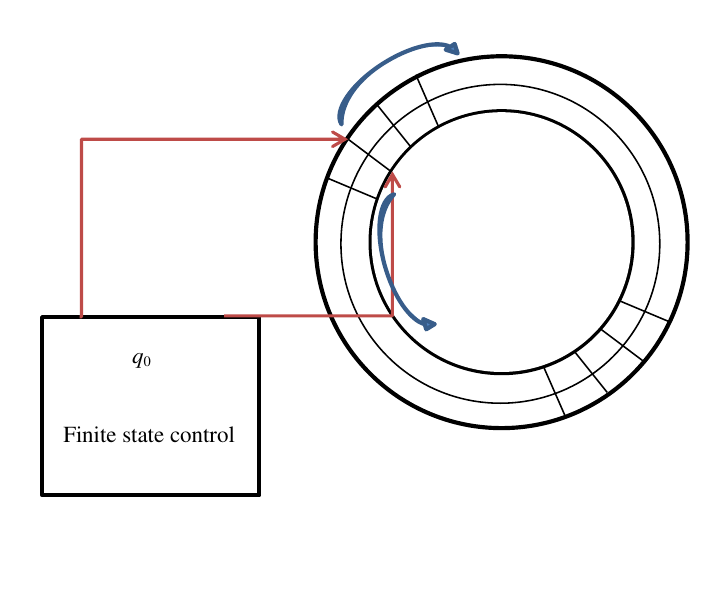}%
            \caption{A sensing $5'\rightarrow 3'$ WK automaton in the initial configuration.} 
  \label{sd-init}
\end{figure}        

A $5'\rightarrow3'$ WK automaton is called \textit{sensing} if it senses that its heads are meeting, i.e., they are in the same position.
As in these models the full input is already processed at that time (if the heads meet again), we use the model to make the decision of the type of the computation at that point, i.e., if the computation is an accepting computation.
  
Formally, a \textit{Watson-Crick automaton} is a 6-tuple $M=(V,\rho,Q,q_0,F,\delta)$, where:
\begin{itemize}
\item $V$ is the (input or tape) alphabet,
\item $\rho\subseteq V\times V$ denotes a complementarity relation,
\item $Q$ represents a finite set of states,
\item $q_0\in Q$ is the initial state,
\item $F\subseteq Q$ is the set of final (also called accepting) states and
\item $\delta$ is called the transition function and it is of the form  $\delta: Q \times \left(\begin{array}{c}V^{*}\\ V^{*}\end{array}\right)\rightarrow 2^Q$, such that it is non-empty only for finitely many triplets $(q,u,v), q \in Q, u,v\in V^*$ when these triplets may also be written either in the form $(q,u,v)$ or $(q,\binom{u}{v})$ indicating which of the strings are read by which of the heads. (The heads can be called upper (left or first) and lower (right or second) heads, respectively.
    \end{itemize}

Based on our definition,
in these WK automata every pair of positions in the Watson-Crick tape is read by exactly one of the heads in an accepting computation, thus the complementarity relation cannot play importance; instead, in this paper we always assume that it is the identity relation.  We are presenting the sensing $5'\to 3'$ WK automaton in Figures \ref{sd-init} and \ref{sd} working on the 2-strand necklace. However,
for the above reason, it is more convenient to consider the input as a ``normal'' necklace and not a double stranded necklace. Actually, this is a usual trick to simplify the notation, as, in some cases, also instead of the nucleotide pairs, e.g., $\left[ C \atop G \right]$ (with $C,G\in V$) one may simply write $a \in T$, by shifting the description to a new alphabet, which can be done always if the complementarity relation is symmetric and bijective (as in the case of real DNA). Thus, we may use alphabet $T$ instead of using $V$ and $\rho$, to \textit{simplify the writing of $5'\to 3'$ WK automaton} to a 5-tuple $M=(T,Q,q_0,F,\delta)$, modifying $\delta$ appropriately to use $T$.  
On the other hand, the complementarity relation can always be replaced by the identity, even in the traditional models, as was proven in \cite{Kuske}.
                  \begin{figure}[htb]
    \centering
                  \includegraphics[scale=0.75]{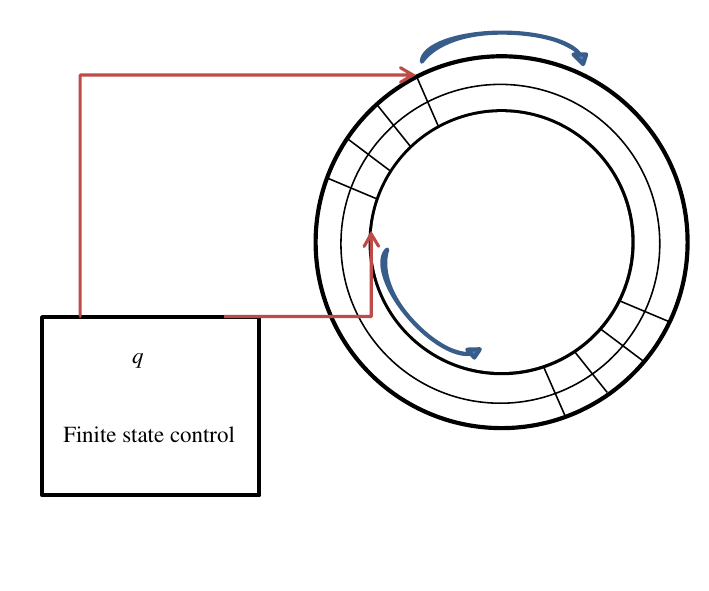}\\ \vspace{-50pt}
          \includegraphics[scale=0.75]{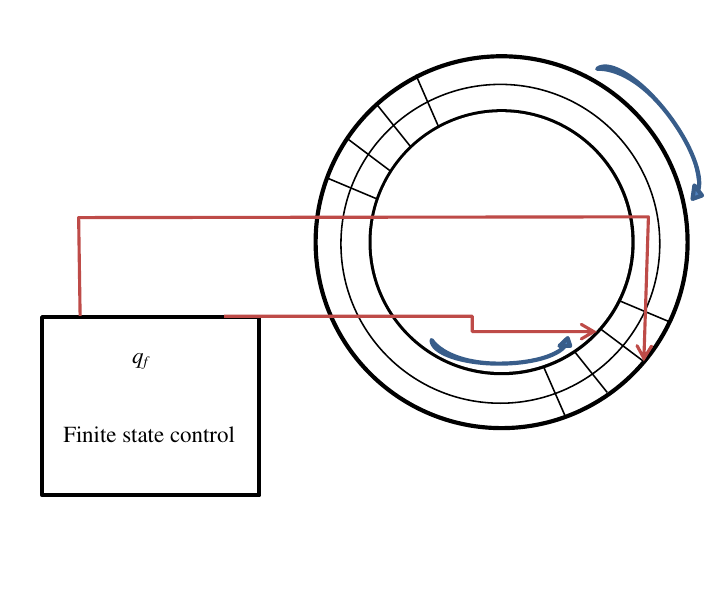}
            \caption{A sensing $5'\rightarrow 3'$ WK automaton in %
            a configuration during a computation %
            and in an accepting configuration with a final state $q_f$ (bottom).}
    \label{sd}
\end{figure}

By continuing the formal description, we consider the \textit{computation} of $5'\to 3'$ WK automata on necklaces as finite sequences of configurations.
A \textit{configuration} %
is a pair $(q,w)$ where $q\in Q$ is the current state of the automaton and $w$ is the part of the input necklace which has not been read (processed) yet written as a normal word as we detail it. 
 In the initial configuration, the initial state $q_0$ is used with any element of the conjugate class of the necklace, mimicking the arbitrary (nondeterministic) choice of a position of the cycle from where the computation starts: the conjugate starting at that position will be processed.  
As the 2 heads are moving in opposite physical directions, the unprocessed part between them will be shorter and shorter until the heads meet (i.e., they are both in the same position again, as this is shown in Figure \ref{sd}).  
 Formally, let  $w',x,y\in T^*,\ q,q'\in Q$. Then, there is a computation step between two configurations:
$(q,xw'y)\Rightarrow (q',w' )$ if and only if $q'\in \delta(q,x,y)$. The reflexive and transitive closure of the relation $\Rightarrow$ is, as usual, denoted by $\Rightarrow^*$ and called computation. For a given conjugate $w\in T^*$  of the input, an accepting computation is a sequence of transitions $(q_0,w) \Rightarrow^* (q_F,\lambda)$, starting from an initial configuration and ending in a configuration consisting of a final state and the empty word.
 Now, based on the conjugate class of a necklace we define our acceptance conditions:
 \begin{enumerate}
\item A necklace $w_{\circ}$ is \textit{weakly accepted} by a WK automaton $M$ if there is a conjugate $vu$ (when $w=uv$, i.e., in this case $vu\in w_{\circ}$) such that there is an accepting computation on $vu$.
\item A necklace $w_{\circ}$ is \textit{strongly accepted} by a WK automaton $M$ if there is an accepting computation for every conjugate of $w$ (i.e., each element of $ w_{\circ}$).
\end{enumerate} The weak and strong distinction comes in a similar manner as it was used for periods in \cite{DM}. 
Now, one may consider the former case, as there is a nondeterministic choice for where to cut the necklace to start the computation, and if this (nondeterministically chosen) starting point leads to an accepting computation, then the necklace is accepted.
Contrariwise, in the latter case, there must be an accepting computation for each possible starting point for a necklace to be included in the accepted language.
 \begin{enumerate}
\item The language $L$ of necklaces is weakly accepted by a WK automaton $M$ if for each word $w\in L$, there is a conjugate $vu$ (when $w=uv$) such that there is an accepting computation on $vu$.
\item The language $L$ of necklaces is strongly accepted by a WK automaton $M$ if for each necklace $w_{\circ} \subset L$, all conjugates of $w$ (i.e., each element of $w_{\circ}$) are accepted by $M$ by some computations.
\end{enumerate}
We may also write these conditions more formally and we can also use some special notation for these languages:
\begin{enumerate}
\item $L_w(M)=\{ w \in T^*  ~|~  \text{ there exist } u,v \in T^*,\  q_f\in F \text{ such that } w=uv, (q_0,vu) \Rightarrow^* (q_f,\lambda) \}.$
\item $L_s(M)=\{ w \in T^*  ~|$  for each $ u \in w_{\circ}$ there is a $  q_f\in F $ such that %
 $ (q_0,u) \Rightarrow^* (q_f,\lambda) \}.$
\end{enumerate}
The classes of necklace languages weakly and strongly accepted by sensing $5' \rightarrow 3'$ WK automata are denoted by $\mathcal{W}_*$ and $\mathcal{S}_*$.
There are some restricted variants of WK automata which are usually considered (e.g., from computational and biological motivations):
\begin{itemize}
\item $\textbf{N}$: stateless, i.e., with only one state: if $Q=F=\{q_0\}$;
\item $\textbf{F}$: all-final, i.e., with only final states: if $Q=F$;
\item $\textbf{S}$: simple (at most one head moves in a step) $\delta:(Q\times ((\{\lambda\},T^* )\cup(T^*,\{\lambda\})))\rightarrow 2^Q$.
\item $\textbf{1}$: 1-limited (exactly one letter is being read in each step) $\delta:(Q\times ((\{\lambda\},T)\cup (T,\{\lambda\})))\rightarrow2^Q$.
\end{itemize}
Let $\mathcal{W}_N$, $\mathcal{W}_F$, $\mathcal{W}_S$ and $\mathcal{W}_1$ denote the necklace language classes weakly accepted by sensing {\textbf{N}}, {\textbf{F}}, {\textbf{S}} and {\textbf{1}} $5'\rightarrow 3'$ WK automata, respectively. 
 Further variants having multiple constraints can also be defined as sensing \textbf{F1}, \textbf{N1}, \textbf{FS}, \textbf{NS}  $5'\to 3'$ WK automata. Their weakly accepted language classes are denoted by $\mathcal{W}_{F1}$, $\mathcal{W}_{N1}$, $\mathcal{W}_{FS}$ and $\mathcal{W}_{NS}$, respectively.
Similarly, the notation $\mathcal{S}_N, \mathcal{S}_F, \mathcal{S}_S, \mathcal{S}_1$, etc. will be used for the classes that are strongly accepted by the restricted classes of $5'\rightarrow 3'$ WK automata, respectively. Further, we may use the traditional way of acceptance for `ordinary' (i.e., not necessarily necklace) languages and we use the notation for these classes, $\mathcal{L}_*, \mathcal{L}_N$, etc., respectively.

%
\section{On weakly accepted necklace language classes}\label{w:WK}

With this section our aim is twofold. On the one hand we would like to present some general result on the class $\mathcal{W}_*$ and, on the other hand, we are presenting hierarchy results among the language classes of necklaces that are weakly accepted by the restricted models.

The next proposition is a direct consequence of the definitions and the fact that exactly the class $\mathcal{L}_{LIN}$ is accepted the class of (unrestricted)  sensing $5'\rightarrow 3'$ WK finite automata \cite{DNA2008,AFL,CiE,Nagy2013,NaCo}.

 \begin{proposition}
 The cyclic closure $cycl(\mathcal{L}_{LIN})$ is weakly accepted by sensing $5'\rightarrow 3'$ WK finite automata, that is, for 
 each linear language $L$, its cyclic closure $cycl(L)$ is in $\mathcal{W}_*$ and for each language $L' \in \mathcal{W}_*$ there is a linear language $L''$ such that $L' = cycl(L'')$.   
 
Moreover, for each restricted class \textbf{x}$ \in \{$\textbf{S},\textbf{1},\textbf{F},\textbf{N},\textbf{FS},\textbf{F1},\textbf{NS},\textbf{N1}$\}$, $$\mathcal{W}_x = cycl(\mathcal{L}_x) ,$$ i.e., 
the class of weakly accepted necklace languages by a restricted class of sensing $5'\rightarrow 3'$ WK finite automata is the same as
    the cyclic closure of the languages accepted by the class of sensing $5'\rightarrow 3'$ WK finite automata with the same  restriction.
  \end{proposition}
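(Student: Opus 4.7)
The plan is to reduce both claims directly to the definition of weak acceptance and the already cited characterization of sensing $5'\to 3'$ WK automata as acceptors of $\mathcal{L}_{LIN}$. The central observation I would isolate first is the following identity: for any sensing $5'\to 3'$ WK automaton $M$, if $L(M)$ denotes the ordinary linear language accepted by $M$ (i.e., the set of $w$ with $(q_0,w)\Rightarrow^* (q_f,\lambda)$ for some $q_f\in F$), then
\[
L_w(M) \;=\; cycl(L(M)).
\]
This is almost a rewriting of the definitions: a necklace is weakly accepted iff some one of its conjugates $vu$ admits an accepting run of $M$, and $\{uv : vu\in L(M)\}$ is exactly $cycl(L(M))$. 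I would spell this small calculation out explicitly since everything below rests on it.

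With the identity in hand, the two containments of the first statement are immediate. For $cycl(\mathcal{L}_{LIN})\subseteq \mathcal{W}_*$, take any linear language $L$; by the characterization of \cite{DNA2008,AFL,CiE,Nagy2013,NaCo} there is a sensing $5'\to 3'$ WK automaton $M$ with $L(M)=L$, and then $L_w(M)=cycl(L(M))=cycl(L)$, so $cycl(L)\in\mathcal{W}_*$. Conversely, if $L'\in\mathcal{W}_*$, pick a witnessing automaton $M$, set $L''=L(M)\in\mathcal{L}_{LIN}$, and conclude $L'=L_w(M)=cycl(L'')$.

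For the restricted classes $\textbf{x}\in\{\textbf{S},\textbf{1},\textbf{F},\textbf{N},\textbf{FS},\textbf{F1},\textbf{NS},\textbf{N1}\}$ the argument is the same automaton used in both directions: the identity $L_w(M)=cycl(L(M))$ makes no reference to the shape of $\delta$ or the state set, so it is preserved under any of the syntactic restrictions. Thus, if $M$ belongs to class \textbf{x} and $L(M)\in\mathcal{L}_x$, then $L_w(M)=cycl(L(M))\in cycl(\mathcal{L}_x)$, and conversely from any $L\in\mathcal{L}_x$ witnessed by an \textbf{x}-automaton $M$ we obtain $cycl(L)=L_w(M)\in\mathcal{W}_x$. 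Hence $\mathcal{W}_x=cycl(\mathcal{L}_x)$ for every listed restriction.

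There is not really a hard step here; the work is conceptual, in recognizing that weak acceptance is precisely the cyclic-closure operator lifted to automata, and the only external input needed is the known linear-language characterization of the unrestricted model. If anything requires care, it is merely being explicit that in the definition of a computation step $(q,xw'y)\Rightarrow(q',w')$ the heads start from the two ends of a fixed linearization $vu$ of the necklace, so the nondeterministic choice of starting point on the cycle matches exactly the choice of a representative of the conjugate class — this is what makes the equality $L_w(M)=cycl(L(M))$ hold on the nose rather than only one inclusion.
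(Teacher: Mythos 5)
Your proof is correct and matches the paper's (implicit) argument: the paper states this proposition as a direct consequence of the definitions together with the known fact that sensing $5'\to 3'$ WK automata (and the listed restricted variants) accept exactly $\mathcal{L}_{LIN}$ (resp.\ $\mathcal{L}_x$), which is precisely what your identity $L_w(M)=cycl(L(M))$ makes explicit. Nothing is missing; you have simply written out the calculation the paper leaves to the reader.
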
 
  
  The cyclic closure of the class $\mathcal{L}_{LIN}$ was also defined as a kind of class of necklace languages (i.e., languages of cyclic words) among many other classes based on a somewhat similar idea in \cite{Manfred}.

As $\mathcal{L}_{LIN}$ is not closed, but $\mathcal{L}_{context-free}$ is closed under cyclic shift \cite{genCycl,HopUl}, we can relate our classes to the Chomsky hierarchy. As, clearly both  $\mathcal{L}_{LIN}$ and  $\mathcal{L}_{context-free}$ contain some languages that are not necklace languages, (e.g., the singleton language $\{ab\}$), we have:
\begin{proposition}
The inclusion  $ \mathcal{W}_* \subsetneq \mathcal{L}_{context-free}$ is proper, while the classes
 $ \mathcal{W}_*$ and $\mathcal{L}_{LIN}$ are incomparable under set theoretic inclusion.
\end{proposition}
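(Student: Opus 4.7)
The plan is to leverage the previous proposition, which identifies $\mathcal{W}_*$ with $cycl(\mathcal{L}_{LIN})$, together with two known closure facts: $\mathcal{L}_{context-free}$ is closed under cyclic shift, while $\mathcal{L}_{LIN}$ is not. The key structural observation driving everything is that every language in $\mathcal{W}_* = cycl(\mathcal{L}_{LIN})$ is, by construction, a necklace language (i.e., closed under $cycl$), whereas neither $\mathcal{L}_{LIN}$ nor $\mathcal{L}_{context-free}$ imposes such closure.

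For the inclusion $\mathcal{W}_* \subseteq \mathcal{L}_{context-free}$, I would start from $\mathcal{W}_* = cycl(\mathcal{L}_{LIN}) \subseteq cycl(\mathcal{L}_{context-free})$, using monotonicity of $cycl$ and $\mathcal{L}_{LIN}\subseteq \mathcal{L}_{context-free}$, and then invoke the closure of the context-free class under cyclic shift to conclude $cycl(\mathcal{L}_{context-free}) = \mathcal{L}_{context-free}$. To make this inclusion proper, I would exhibit a context-free language that is not in $\mathcal{W}_*$; the singleton $\{ab\}$ flagged in the excerpt works, because every member of $\mathcal{W}_*$ is necklace-closed, yet $\{ab\}$ is not (its cyclic shift $ba$ is missing). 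This same observation immediately yields $\mathcal{L}_{LIN}\not\subseteq\mathcal{W}_*$, since $\{ab\}$ is regular and hence linear.

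For the remaining direction of incomparability, $\mathcal{W}_*\not\subseteq\mathcal{L}_{LIN}$, I would appeal directly to the non-closure of $\mathcal{L}_{LIN}$ under cyclic shift recalled just above: by definition of non-closure, there exists a linear language $L$ with $cycl(L)\notin \mathcal{L}_{LIN}$, and by the previous proposition this $cycl(L)$ lies in $\mathcal{W}_*$, giving the required separator. A concrete witness, should one be wanted, can be built from the standard examples in the cited references (for instance, taking $L$ to be a linear language whose cyclic conjugates require simultaneous matching of two independent bracket pairs, so that the conjugate class forces a nonlinear pumping pattern).

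The only subtle step is the properness argument: one must be careful to note not just that $\mathcal{W}_*$ consists of cyclic closures, but that every member of $\mathcal{W}_*$ is itself closed under $cycl$, so that any non-necklace language (like $\{ab\}$) automatically falls outside. The rest is essentially bookkeeping over the previous proposition and the cited closure/non-closure facts, so I do not expect any genuine obstacle.
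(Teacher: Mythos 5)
Your proposal is correct and follows essentially the same route as the paper: the inclusion $\mathcal{W}_*=cycl(\mathcal{L}_{LIN})\subseteq\mathcal{L}_{context-free}$ via closure of the context-free class under cyclic shift, properness and $\mathcal{L}_{LIN}\not\subseteq\mathcal{W}_*$ via the non-necklace language $\{ab\}$, and $\mathcal{W}_*\not\subseteq\mathcal{L}_{LIN}$ via the non-closure of $\mathcal{L}_{LIN}$ under cyclic shift. No gaps to report.
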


Now we show some equivalences among the classes on the top of the hierarchy. 
By, e.g., \cite{AFL,NaCo},
 it is known that sensing $5'\rightarrow 3'$ WK finite automata
accept exactly the linear context-free languages, moreover the same class is accepted by the classes of the following variants: 
  $\mathcal{L}_{LIN}=\mathcal{L}_{*}=\mathcal{L}_{S}=\mathcal{L}_{1}.$ 
This gives the consequence that the weakly accepted classes will also be the same:
\begin{proposition} The following classes of necklace languages are identical: 
  $cycl(\mathcal{L}_{LIN})=\mathcal{W}_{*}=\mathcal{W}_{S}=\mathcal{W}_{1}.$
\end{proposition}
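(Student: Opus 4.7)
The proof plan is essentially a direct combination of the two facts stated immediately before the proposition, so I would present it as a short chain of equalities rather than any substantive new argument.

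First, I would invoke the preceding Proposition, which states that for every restricted class $\textbf{x}$ (including the unrestricted case $*$, as well as $\textbf{S}$ and $\textbf{1}$), we have $\mathcal{W}_x = cycl(\mathcal{L}_x)$. This reduces the claim about weakly accepted necklace language classes to a claim about the ordinary accepted language classes.

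Second, I would cite the result recalled from \cite{AFL,NaCo} just before the proposition, namely $\mathcal{L}_{LIN} = \mathcal{L}_* = \mathcal{L}_S = \mathcal{L}_1$. Applying the operator $cycl$ to each side (which makes sense because $cycl$ is defined on arbitrary languages and thus on classes of languages), these three equalities are preserved:
\begin{equation*}
cycl(\mathcal{L}_{LIN}) = cycl(\mathcal{L}_*) = cycl(\mathcal{L}_S) = cycl(\mathcal{L}_1).
\end{equation*}
Combining this chain with the first step, we obtain
\begin{equation*}
cycl(\mathcal{L}_{LIN}) = \mathcal{W}_* = \mathcal{W}_S = \mathcal{W}_1,
\end{equation*}
which is exactly the desired statement.

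There is no real obstacle here: the content of the proposition is already packaged into the two previously established facts, and the only thing to verify is that applying $cycl$ to equal language classes yields equal classes, which is immediate from the definition $cycl(\mathcal{L}) = \{ cycl(L) \mid L \in \mathcal{L} \}$. I would therefore keep the written proof to two or three lines, just making the above substitutions explicit.
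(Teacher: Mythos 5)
Your proposal is correct and matches the paper's approach: the paper also presents this proposition as an immediate consequence of the preceding proposition ($\mathcal{W}_x = cycl(\mathcal{L}_x)$, with the unrestricted case covered by its first part) together with the cited equality $\mathcal{L}_{LIN}=\mathcal{L}_{*}=\mathcal{L}_{S}=\mathcal{L}_{1}$, giving no further argument. Applying $cycl$ to the equal classes, exactly as you do, is all that is needed.
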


In the rest of the section we present various hierarchy results of the considered necklace languages.

To show that none of he language classes is empty, we start with the most restricted class, the necklace languages weakly accepted by sensing \textbf{N1} $5'\to 3'$ WK automata, to give an example language.

\begin{proposition}
The language $L=\{1^i0^j1^k~|~ i,j,k \in \mathbb N_0 \} \cup \{0^i1^j0^k~|~i,j,k \in \mathbb N_0\}$  is weakly accepted by the sensing  \textbf{N1} $5'\rightarrow3'$ WK automaton: $M = (\{0,1\},\{q\},q,\{q\},\delta)$ with two allowed transitions $q\in \delta(q,0,\lambda)$ and $q\in \delta(q,\lambda,1)$. 
 \end{proposition}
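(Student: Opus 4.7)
The plan is to first identify the set of linear inputs on which $M$ admits an accepting computation, and then apply the definition of weak acceptance to take the cyclic closure.

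First, I would analyze what $M$ accepts when started on a given word $u$ directly. Since the only transitions are $q \in \delta(q,0,\lambda)$ and $q \in \delta(q,\lambda,1)$, in any computation step either the upper head consumes a $0$ from the left end of the unread portion or the lower head consumes a $1$ from the right end. Because $q$ is both initial and final, $u$ is accepted iff the two heads can together peel off all of $u$ before meeting, which happens exactly when $u$ has the form $0^j 1^k$ for some $j,k \in \mathbb N_0$: the prefix of $0$'s is removed by the upper head in some interleaving with the suffix of $1$'s being removed by the lower head, and the state remains $q$ throughout. Hence the linearly accepted language is $\{0^j 1^k : j,k \in \mathbb N_0\}$.

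Next, by the definition of weak acceptance, $w \in L_w(M)$ iff some conjugate of $w$ lies in $\{0^j 1^k : j,k \in \mathbb N_0\}$. The conjugates of $0^j 1^k$ are the words $0^{j-a} 1^k 0^a$ for $0 \le a \le j$ together with $1^{k-b} 0^j 1^b$ for $0 \le b \le k$. Taking the union over all $j,k \in \mathbb N_0$, these yield exactly the words in $L$: any $0^i 1^j 0^k \in L$ is a conjugate of $0^{i+k} 1^j$, and any $1^i 0^j 1^k \in L$ is a conjugate of $0^j 1^{i+k}$; conversely, every conjugate of some $0^j 1^k$ manifestly belongs to $L$.

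No serious obstacle is expected. The first step is essentially immediate from the restricted transition set and the singleton state space of $M$; the only point requiring some care is the combinatorial verification of the two-way inclusion in the second step, which reduces to a routine case distinction on whether the cyclic shift crosses the $0$-block or the $1$-block.
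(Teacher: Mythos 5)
Your proposal is correct and follows essentially the same route as the paper: identify the linearly accepted language as $\{0^j1^k \mid j,k\in\mathbb N_0\}$ and then verify that its cyclic closure is exactly $L$, which is precisely what weak acceptance requires. The only point the paper states that you leave implicit is the (trivial) observation that $M$ is indeed a sensing \textbf{N1} automaton, having a single state and reading exactly one letter per step.
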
 \begin{proof}
Clearly the automaton has only one state and it reads exactly one letter in each step of the computation, thus it is
a sensing \textbf{N1} $5'\to 3'$ WK automaton. 
 
 Now, considering the accepted language, for each word of the language there is a conjugate in the  form $0^n 1^m$ (for $w=1^i0^j1^k$, $n=j$ and $m=i+k$; for $w=0^i1^j0^k$, $n=i+k$ and $m=j$). On the other hand, $M$ is accepting
 the language $L(M) = \{0^n1^m~|~n,m\in\mathbb N_0\}$ when the first transition is used $n$, the second one $m$ times during the computation. Now, as $cycl(L(M)) = L$, the language $L$ is  weakly accepted by $M$. The proof is complete. 
 \end{proof}
On the one hand,
as all sensing $\textbf{N1}$ $5'\to 3'$ WK automata are also sensing $\textbf{F1}$ and also sensing $\textbf{NS}$ $5'\to 3'$ WK automata, we have obvious inclusion among the (weakly) accepted language classes. On the other hand, we state and shall prove that both of these inclusions are proper.

\begin{theorem}\label{thm:2}
Each of the classes $\mathcal{W}_{F1}$ %
 and %
  $\mathcal{W}_{NS}$ properly includes the class $\mathcal{W}_{N1}$:
	$$\mathcal{W}_{N1}\subsetneq \mathcal{W}_{F1} \text{ \ \  and \  \ } \mathcal{W}_{N1}\subsetneq \mathcal{W}_{NS}.$$ 
\end{theorem}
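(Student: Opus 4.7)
The inclusions $\mathcal{W}_{N1}\subseteq\mathcal{W}_{F1}$ and $\mathcal{W}_{N1}\subseteq\mathcal{W}_{NS}$ will follow immediately from the definitions: every sensing \textbf{N1} automaton is trivially all-final (its unique state is both initial and accepting, so $Q=F$) and uses only single-letter reads, which are particular simple transitions. Hence every sensing \textbf{N1} automaton is simultaneously a sensing \textbf{F1} and a sensing \textbf{NS} automaton, and combined with the equality $\mathcal{W}_x=cycl(\mathcal{L}_x)$ from the preceding proposition, this will propagate the inclusions to the weakly accepted necklace language classes.

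For strictness I shall first pin down $\mathcal{L}_{N1}$ explicitly. A sensing \textbf{N1} automaton can only have transitions of the forms $q_0\in\delta(q_0,a,\lambda)$ with $a$ in some $A\subseteq T$ and $q_0\in\delta(q_0,\lambda,b)$ with $b$ in some $B\subseteq T$; its accepting computations consume a prefix in $A^*$ from the upper head and a suffix in $B^*$ from the lower head, giving $L(M)=A^*B^*$ and therefore $\mathcal{W}_{N1}=\{cycl(A^*B^*)\mid A,B\subseteq T\}$. From this description I extract the main lemma: for any $L=cycl(A^*B^*)$ and any letter $x\in T$, one has $x\in L$ if and only if $x\in A\cup B$, and in that case $x^n\in L$ for every $n\geq 0$. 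The argument is short: the only rotation of $x$ is itself, so $x\in L$ forces $x\in A^*B^*$; any factorization $x=uv$ with $u\in A^*$, $v\in B^*$ forces $x\in A\cup B$, after which $x^n\in A^*\cup B^*\subseteq L$. Isolating this singleton closure lemma is the principal obstacle; once it is available, both non-membership arguments below become brief.

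For $\mathcal{W}_{N1}\subsetneq\mathcal{W}_{F1}$ I shall use $L_1=\{\lambda\}\cup a^*ba^*=cycl(\{\lambda\}\cup ba^*)$. The sensing \textbf{F1} automaton with $Q=F=\{q_0,q_1\}$, initial state $q_0$, and only the two transitions $q_1\in\delta(q_0,b,\lambda)$ and $q_1\in\delta(q_1,\lambda,a)$ accepts $\{\lambda\}\cup ba^*$ in the classical sense (a short case analysis of which transitions are applicable verifies this), so $L_1\in\mathcal{W}_{F1}$. Since $b\in L_1$ but $bb\notin L_1$, the implication $x\in L\Rightarrow x^n\in L$ of the lemma fails at $x=b$, so $L_1\notin\mathcal{W}_{N1}$. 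For $\mathcal{W}_{N1}\subsetneq\mathcal{W}_{NS}$ I shall use $L_2=cycl((ab)^*)$; the stateless simple automaton with the single transition $q_0\in\delta(q_0,ab,\lambda)$ accepts $(ab)^*$, so $L_2\in\mathcal{W}_{NS}$. If $L_2$ were equal to $cycl(A^*B^*)$, then $ab\in L_2$ would force some rotation, $ab$ or $ba$, to lie in $A^*B^*$, and inspecting each of the six possible factorizations shows $a\in A\cup B$ in every case; the lemma would then give $a\in L_2$, contradicting the fact that no rotation of $a$ lies in $(ab)^*$.
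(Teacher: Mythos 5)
Your proposal is correct, but it takes a somewhat different route from the paper. The paper separates the classes with two tailor-made witnesses, $cycl(\{0^n1^m \mid n\in\{m,m+1\}\})$ for \textbf{F1} and $cycl(\{0^{2n}1^m\})$ for \textbf{NS}, and in each case derives a contradiction directly from the transitions a hypothetical \textbf{N1} automaton would be forced to have (a loop on $0$ pumps $0^*$ into the language; reading $00$ letter by letter through the sole final state forces acceptance of $0$). You instead first prove an exact characterization, $\mathcal{L}_{N1}=\{A^*B^* \mid A,B\subseteq T\}$, hence $\mathcal{W}_{N1}=\{cycl(A^*B^*)\}$, extract a single-letter closure lemma ($x\in cycl(A^*B^*)$ iff $x\in A\cup B$, in which case all of $x^*$ is included), and then dispatch both separations with the simpler witnesses $\{\lambda\}\cup a^*ba^*$ and $cycl((ab)^*)$; your verification of the characterization, of the small \textbf{F1} and \textbf{NS} automata, and of the two contradictions is sound. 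Your approach buys a reusable and stronger structural statement about \textbf{N1} automata — essentially the weak-acceptance analogue of the paper's later characterization of $\mathcal{S}_{N1}$ as $T_1^*\cup T_2^*$ — which makes the non-membership arguments one-liners; the paper's approach avoids proving any characterization and its first witness language is reused verbatim in the proof of the next theorem (the $\mathcal{W}_{NS}\subsetneq\mathcal{W}_{FS}$ separation), so its specific choices pay off downstream. Both arguments ultimately rest on the same phenomenon: a stateless $1$-limited automaton that touches a letter at all must accept arbitrary powers of that letter.
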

\begin{proof} As the inclusions are obvious by definition, we shall prove only their properness.

Let us consider the first statement and the language $L=\{1^i0^j1^k~|~ i,j,k \in \mathbb N_0, \ j\in\{ i+k, i+k+1  \} \} \cup \{0^i1^j0^k~|~i,j,k \in \mathbb N_0, \ i+k\in\{j,j+1 \}\}$.
It can  weakly  be  accepted by a sensing \textbf{F1}   $5' \rightarrow 3'$ WK automaton: Let
   $M = (\{0,1\},\{q,p\},q,\{q,p\},\delta)$ with two allowed transitions $p\in \delta(q,0,\lambda)$ and $q\in \delta(p,\lambda,1)$. 
 Notice that for each word in $L$ there is a conjugate in the form $0^n 1^m$ with the condition that either $n=m$ or $n=m+1$. However, $M$ is accepting exactly the language $L' = \{0^n 1^m~|~ n\in\{m,m+1\} \}$, and thus weakly accepting $L = cycl(L')$. 
 
 To complete the proof of the first part, we should show that %
 $L$ cannot be weakly accepted by any sensing {\textbf{N1}} $5' \rightarrow 3'$ WK automata.
This part of the proof goes by contradiction:
Suppose that $L$ is weakly accepted by a sensing \textbf{N1}  $5' \rightarrow 3'$ WK automaton $M'$ with the sole state $r$ and transition mapping $\delta'$. As $0 = 0^11^0 \in L$, $M'$ must have at least one of the loop-transitions $r\in \delta'(r,0,\lambda)$ and $r\in\delta'(r,\lambda,0)$. However, in either case, all words (and thus all necklaces) of $0^*$ would be (weakly) accepted. However, no words (necklaces) of $0^*$ other than $\lambda$ and $0$ are in the language.
This contradiction proves that there is no sensing \textbf{N1} $5' \rightarrow 3'$ WK automaton that weakly accepts $L$, thus, the first statement of the theorem has been proven.

Considering the second statement, let us consider the language $L''=\{1^i0^j1^k~|~ i,j,k \in \mathbb N_0, j \text{ is even}\} \cup \{0^i1^j0^k~|~i,j,k \in \mathbb N_0, i+k \text{ is even}\}$.
On the one hand, we show that $L''$ is weakly accepted by a sensing \textbf{NS} $5'\to 3'$ WK automaton.
 Thus, let $M'' = (\{0,1\},\{q\},q,\{q\},\delta'')$ with two allowed transitions $q\in \delta''(q,00,\lambda)$ and $q\in \delta''(q,\lambda,1)$. Then the language accepted by $M''$ is $L(M'') = \{ 0^{2n}1^m~|~ n,m\in\mathbb N_0  \}$, and its cyclic closure $cycl(L(M'')) = L''$, i.e., there is an even number of $0$s such that either they are next to each other, or they form the prefix and suffix of the word in $L''$.
 Now, on the other hand, we shall prove that $L''$ is not weakly accepted by any sensing \textbf{N1} $5' \rightarrow 3'$ WK automata. To show this, notice that $00 \in L''$, but $0\not\in L''$. However, a sensing \textbf{N1} $5' \rightarrow 3'$ WK automaton must read the input letter by letter, and each already read part must also form an accepted word, thus
to accept $00$, the automaton must read a $0$ in the first step of the computation, however, then $0$ would also be accepted.
In this way the proper inclusion of the second statement has also been proven. \qed
\end{proof}

\begin{theorem}\label{thm:3} Each of
the classes $\mathcal{W}_{N}$ %
 and %
  $\mathcal{W}_{FS}$ properly includes the class $\mathcal{W}_{NS}$:
$$\mathcal{W}_{NS}\subsetneq \mathcal{W}_{N} \text{ \ \  and \ \  } \mathcal{W}_{NS}\subsetneq \mathcal{W}_{FS}.$$	
\end{theorem}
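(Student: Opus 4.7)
The plan is to establish each strict inclusion by exhibiting a separating necklace language; the inclusions themselves are immediate, because every stateless simple automaton qualifies as both a sensing \textbf{N} automaton (it is stateless) and a sensing \textbf{FS} automaton (its one state lies in $F$, and all of its transitions are simple).

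The common structural observation underlying both nonmembership arguments is a description of $\mathcal{L}_{NS}$. In a stateless simple automaton every transition is a self-loop of the form $(q_0,u,\lambda)$ or $(q_0,\lambda,v)$, and two such transitions commute because they act on opposite ends of the currently unread portion of the input. Writing $U$ and $V$ for the (finite) sets of strings read by the left and right heads, respectively, the accepted language is therefore exactly $L(M) = U^{*}V^{*}$, which is regular. Consequently every language in $\mathcal{W}_{NS} = cycl(\mathcal{L}_{NS})$ is regular, since the class of regular languages is closed under cyclic shift.

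For $\mathcal{W}_{NS} \subsetneq \mathcal{W}_{N}$, I would take $L_1 = cycl(\{a^{n}b^{n} \mid n \in \mathbb{N}_0\})$. The one-state automaton with the lone transition $q \in \delta(q,a,b)$ is an \textbf{N} automaton accepting $\{a^{n}b^{n}\}$, so $L_1 \in \mathcal{W}_{N}$. A pumping argument applied to $a^{p}b^{p} \in L_1$ then shows that $L_1$ is not regular: the pumped factor must lie inside the initial block of $a$'s, and pumping yields $a^{p+m}b^{p}$ with $m \geq 1$, which has unequal numbers of $a$'s and $b$'s and hence cannot be a conjugate of any $a^{n}b^{n}$. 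Therefore $L_1 \notin \mathcal{W}_{NS}$.

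For $\mathcal{W}_{NS} \subsetneq \mathcal{W}_{FS}$, I would take the finite necklace language $L_2 = \{\lambda, a, b\}$. It is weakly accepted by the \textbf{FS} automaton with $Q = F = \{q_0, q_1\}$, initial state $q_0$, and transitions $q_1 \in \delta(q_0, a, \lambda)$ and $q_1 \in \delta(q_0, b, \lambda)$: the only complete computations are the empty one and the two single-step ones, so the accepted language is exactly $L_2$. Conversely, any NS automaton $M$ with $cycl(L(M)) = L_2$ would satisfy $L(M) \subseteq L_2$; but in the representation $L(M) = U^{*}V^{*}$, no nonempty string can appear in $U \cup V$, since otherwise a word of length at least two would lie in $L(M) \setminus L_2$. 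This forces $L(M) = \{\lambda\}$ and $cycl(L(M)) = \{\lambda\} \neq L_2$. The main obstacle is the structural identification $L(M) = U^{*}V^{*}$ for stateless simple automata; once it is in hand, both separations reduce to elementary counting or inspection.
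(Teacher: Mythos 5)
Your proof is correct, but it follows a genuinely different route from the paper. The paper separates the classes with tailored witness languages and ad hoc stateless-automaton arguments: for $\mathcal{W}_{NS}\subsetneq\mathcal{W}_{N}$ it uses the necklace language generated by the two-head transitions $(0,11)$ and $(00,1)$ and argues that any stateless simple automaton weakly accepting it would have to read one of $011,101,110$ in a single one-head step, whose threefold repetition produces a word with too many blocks of $0$s; for $\mathcal{W}_{NS}\subsetneq\mathcal{W}_{FS}$ it reuses the witness of the preceding theorem and the observation that a stateless automaton accepting $0$ must also accept $00$ and $000$. You instead prove a structural lemma: a stateless simple automaton with one-head read sets $U$ and $V$ accepts exactly $U^{*}V^{*}$, so every language in $\mathcal{L}_{NS}$, and hence (via $\mathcal{W}_{NS}=cycl(\mathcal{L}_{NS})$ and closure of regular languages under cyclic shift, both available in the paper) every language in $\mathcal{W}_{NS}$, is regular. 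The first separation then follows from the non-regularity of $cycl(\{a^{n}b^{n}\})$, which lies in $\mathcal{W}_{N}$ via the single transition $q\in\delta(q,a,b)$, and the second from the finite necklace language $\{\lambda,a,b\}$, which your two-state all-final simple automaton weakly accepts but which is incompatible with the $U^{*}V^{*}$ form (a nonempty string in $U\cup V$ forces an accepted word of length at least two). Your lemma is the more reusable tool --- it characterizes $\mathcal{L}_{NS}$ outright and immediately bounds $\mathcal{W}_{NS}$ inside the regular languages --- whereas the paper's arguments are local to its chosen witnesses; on the other hand, the paper's second argument actually excludes its witness from the larger class $\mathcal{W}_{N}$, and its witnesses do double duty in neighbouring theorems. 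One small presentational point: the ``commutation'' of prefix and suffix transitions is not needed (and is not literally true when the reads would overlap on a short remaining word); what you really use is that any accepting run removes a sequence of $U$-blocks from the front and $V$-blocks from the back, so the input factors as an element of $U^{*}$ followed by an element of $V^{*}$, and conversely any such word is accepted by doing all front reads first.
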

\begin{proof}
Let us start with the first statement and consider the necklace language $L=\{0^i 1^j 0^k ~|~ \text{ there exist }$ $ n,m \in \mathbb N_0 \text{  such that  } i+k = 2n+m, j = 2m+n   \} \cup \{1^i 0^j 1^k ~|~ \text{ there exist } n,m \in \mathbb N_0 \text{  such that  } i+k = 2n+m, j = 2m+n \}$.
Now, on the one hand, the automaton $M=(\{0,1\},\{q\},q,\{q\},\delta)$ with two transitions $q\in\delta(q,0,11)$ and $q\in\delta(q,00,1)$ is weakly accepting $L$, as each element of $L$ has a conjugate $0^r 1^s $ with $n,m\in\mathbb N_0$ such that $r=n+2m$ and $s=2n+m$, where in fact $n$ and $m$ are the numbers of the computation steps made by the two possible transitions, respectively. Further, it is easy to see that $M$ is a sensing {\textbf{N}} $5' \rightarrow 3'$ WK automaton. Now, on the other hand, it shall be shown that $L$ is not weakly accepted by any sensing {\textbf{NS}} $5' \rightarrow 3'$ WK automaton. This part of the proof is by contradiction, thus let us assume that there is such \textbf{NS} automaton $M'$ that weakly accepts $L$. As at least one of the words $011,101,110$ is accepted by $M'$ (with sole state $p$ and transition mapping $\delta'$) to include this necklace in the language, the automaton $M'$ must have at least  one of the following six transitions:   
$p \in \delta'(p,011,\lambda)$, $p \in \delta'(p,101,\lambda)$, $p \in \delta'(p,110,\lambda)$, $p \in \delta'(p,\lambda,011)$, $p \in \delta'(p,\lambda,101)$ and $p \in \delta'(p,\lambda,110)$. 
However, now by applying the same transition in three consecutive computation steps, it leads to accept the following word:
$011011011$, $101101101$, $110110110$, $011011011$, $101101101$ or $110110110$ respectively to the six cases.
As all these words contain more than two `blocks' of 0's, clearly none of them is in $L$, thus this contradicts to the fact that $M'$ weakly accepts $L$.
Therefore, the language $L$ cannot be weakly accepted by any sensing \textbf{NS} $5'\rightarrow 3'$ WK automata, completing the proof of the first statement.

Now, let us consider the second statement with the witness language 
 $L=\{1^i0^j1^k~|~ i,j,k \in \mathbb N_0, \ j\in\{ i+k, i+k+1  \} \} \cup \{0^i1^j0^k~|~i,j,k \in \mathbb N_0, \ i+k\in\{j,j+1 \}\}$ used in the proof of the previous theorem. Clearly, the sensing {\textbf{F1}}  $5' \rightarrow 3'$ WK automaton given there is also a sensing \textbf{FS}  $5' \rightarrow 3'$ WK automaton. On the other hand, as one needs to have one of the transitions to accept the word $0$ as it is written in the previous proof, every  sensing {\textbf{N}}  $5' \rightarrow 3'$ WK automaton
 must also accept words like $00$ and $000$ which are not in $L$ (and not any of their conjugates are in $L$).  
This contradiction proves the properness of the inclusion in the second statement.
\qed \end{proof}

\begin{theorem}\label{thm:4}
The class $\mathcal{W}_{FS}$ properly includes the class $\mathcal{W}_{F1}$:
 $$\mathcal{W}_{F1}\subsetneq \mathcal{W}_{FS}  .$$
\end{theorem}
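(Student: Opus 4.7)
The plan is to observe that the inclusion $\mathcal{W}_{F1} \subseteq \mathcal{W}_{FS}$ follows immediately from the definitions — a 1-limited transition reading one letter is a special case of a simple transition in which the other head reads $\lambda$, and the all-final condition is identical in both restrictions — and then to exhibit a separating witness language to establish properness.

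For the witness, I would recycle the language $L'' = \{1^i 0^j 1^k \mid i,j,k \in \mathbb N_0, \ j \text{ even}\} \cup \{0^i 1^j 0^k \mid i,j,k \in \mathbb N_0, \ i+k \text{ even}\}$ introduced in the proof of Theorem~\ref{thm:2}. That proof already supplies a sensing \textbf{NS} automaton $M''$ (with transitions $q \in \delta''(q,00,\lambda)$ and $q \in \delta''(q,\lambda,1)$) that weakly accepts $L''$; since every \textbf{NS} automaton is trivially also an \textbf{FS} automaton (stateless forces the unique state to be final), this gives $L'' \in \mathcal{W}_{FS}$ for free.

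The main step is to show $L'' \notin \mathcal{W}_{F1}$, which I would argue by contradiction, sharpening the argument used against \textbf{N1} in Theorem~\ref{thm:2}. Suppose a sensing \textbf{F1} automaton $M'$ weakly accepts $L''$. Note that $00 \in L''$ (take $i=2, j=k=0$ in the second form, giving $i+k=2$ even), whereas $0 \notin L''$ (neither form admits a single $0$: the first would require $j=1$ odd, the second $i+k=1$ odd). Because the conjugate class of $00$ is the singleton $\{00\}$, $M'$ must possess an accepting computation on the word $00$. Its first step is a 1-limited transition from $q_0$ reading the letter $0$ with either the upper or the lower head, leading to some state $q_1$, which is final by the all-final restriction. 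Now apply the very same transition to the input $0$: the single available letter is consumed, the two heads (initially in positions $0$ and $1$) meet, and the computation halts in $q_1 \in F$. This is an accepting computation on $0$, so $0$ would be weakly accepted by $M'$, contradicting $0 \notin L''$.

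The main conceptual obstacle is ensuring the witness is tight for this particular separation: the all-final restriction by itself does not forbid an accepting computation on some $w$ from being a proper prefix of a longer accepting computation, so one must choose $L''$ so that any first 1-limited step of an accepting computation on some word in $L''$ is \emph{forced} to produce a complete accepting computation on a word outside $L''$. The language $L''$ is tailored precisely to this — \textbf{FS} can accept $00$ atomically by reading the block $00$ in one move, a trick unavailable to \textbf{F1}, where every step must read exactly one letter and hence necessarily leaves a lonely $0$ as an unavoidable accepted shorter word.
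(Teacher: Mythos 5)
Your proposal is correct and follows essentially the same route as the paper: the paper's proof also hinges on the fact that an all-final, 1-limited automaton accepting a length-two word of identical letters must pass through a final state after reading a single letter, thereby (weakly) accepting that one-letter word, which lies outside the language and forms its own singleton conjugate class. The only difference is the witness: the paper uses the simpler unary language $(11)^*$ (accepted by a stateless \textbf{FS} automaton with the sole transition reading $11$), whereas you recycle $L''$ from Theorem~\ref{thm:2}, exploiting $00\in L''$, $0\notin L''$ in exactly the same way.
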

\begin{proof}
	As the inclusion is trivial by definition, we need to show only its properness. Let us consider the witness language $L$ defined by the regular expression $(11)^*$. $L$ contains all words over the unary alphabet with even length.
	Now, on the one hand, let  $M=(\{1\},\{p\},p,\{p\},\delta)$ be a sensing  \textbf{FS} $5' \rightarrow 3'$ WK automaton (in fact also \textbf{NS} and \textbf{N}) with the only transition $p\in\delta(p,11,\lambda)$. Clearly, $L(M) = L_w(M) = L$.
 
On the other hand, we need to show that no sensing  \textbf{F1} $5' \rightarrow 3'$ WK automata can (weakly) accept $L$.		As all states of \textbf{F1} automata are final and they should read the input letter by letter, there must be a configuration when only $1$ is read in the accepting computation of, e.g., $11$. As the state of this configuration must also be accepting, $1$ is also accepted (and weakly accepted) by any \textbf{F1} automata that are able to accept $11$. As $1\not\in L$, this leads to a contradiction, thus there is no  sensing  \textbf{F1} $5' \rightarrow 3'$ WK automata that weakly accept $L$.
\end{proof}

\begin{theorem}
The class $\mathcal{W}_{F}$ properly includes the class $\mathcal{W}_{N}$:
$$\mathcal{W}_{N}\subsetneq \mathcal{W}_{F}.$$
\end{theorem}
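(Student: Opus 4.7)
The inclusion $\mathcal{W}_N\subseteq\mathcal{W}_F$ is immediate from the definitions: a stateless automaton has $Q=F=\{q_0\}$ and is therefore already all-final. My plan for the separation is to exhibit a small finite witness, namely $L=\{\lambda,00\}$ over the unary alphabet $\{0\}$. Both members of $L$ coincide with their own conjugate class, so $L$ is trivially a necklace language, and it remains to verify the two claims $L\in\mathcal{W}_F$ and $L\notin\mathcal{W}_N$.

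On the positive side I would build a two-state all-final machine $M=(\{0\},\{q_0,q_1\},q_0,\{q_0,q_1\},\delta)$ whose only transition is $q_1\in\delta(q_0,0,0)$. With every state final, $\lambda$ is accepted in zero steps, and $00$ is accepted in one step (each head reads a single letter, after which the heads meet); no longer word is accepted because $q_1$ has no outgoing transition. Hence $L(M)=L_w(M)=L$, as required.

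On the negative side I would argue by contradiction. Assume a stateless $M'=(\{0\},\{q\},q,\{q\},\delta')$ weakly accepts $L$. Since $L$ is conjugation-closed and $00$ has no other conjugate than itself, this forces $00\in L(M')$. Fix any accepting computation on $00$ using transitions $t_1=(x_1,y_1),\dots,t_p=(x_p,y_p)$, and set $u=x_1\cdots x_p$, $v=y_p\cdots y_1$, so that $uv=00$. The key observation is that, because $M'$ has only one (final) state, the doubled sequence $t_1,\dots,t_p,t_1,\dots,t_p$ is again a valid accepting computation, and I will argue that it consumes precisely the input $u^2v^2$. Since every factorisation of $00=uv$ yields $u^2v^2=0000$, this forces $0000\in L(M')$, contradicting $L(M')\subseteq L$.

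The main obstacle, though modest, is the bookkeeping verification that the doubled sequence really is an accepting computation on $u^2v^2$: at each step of the second round the required prefix $x_i$ and suffix $y_i$ must still be present on the tape. This follows by induction on the step number, using the fact that after the first round of $p$ transitions the remaining tape is exactly the original $uv=00$, so the second round mirrors the first. Once this is in place, the three cases for the factorisation of $00$ all collapse to the same conclusion and the properness of the inclusion is immediate.
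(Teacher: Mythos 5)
Your proof is correct, but it takes a genuinely different route from the paper. The paper's witness is the infinite binary language $0^*+0^*10^*$: the positive side is a two-state all-final automaton (which is in fact also \textbf{FS} and \textbf{F1}), and the negative side observes that acceptance of the word $1$ forces a stateless machine to have a one-letter transition on $1$, whose iteration accepts $11,111\notin L$. You instead use the finite unary witness $\{\lambda,00\}$, accept it with a two-state all-final machine whose single transition reads with both heads, and for the lower bound you prove a small structural lemma: for a stateless machine, doubling the transition sequence of an accepting computation on $uv$ yields an accepting computation on $u^2v^2$, so $00\in L(M')$ forces $0000\in L_w(M')$. Your bookkeeping is sound (after the first round on $uuvv$ the remaining tape is exactly $uv$, so the second round replays the original computation), and all factorisations of $00$ collapse to $0000$, so the contradiction goes through; note also that $0\notin L(M)$ on the positive side since the transition $(0,0)$ needs two tape letters, so $L_w(M)=\{\lambda,00\}$ as claimed. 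What each approach buys: your doubling lemma is a reusable closure property of stateless acceptance and shows that even a finite necklace language separates $\mathcal{W}_N$ from $\mathcal{W}_F$, whereas the paper's witness, being weakly accepted by an \textbf{F1} (and \textbf{FS}) automaton, simultaneously separates $\mathcal{W}_N$ from the smaller classes $\mathcal{W}_{F1}$ and $\mathcal{W}_{FS}$, which your witness cannot do (an \textbf{F1} machine accepting $00$ would also accept $0$). The core mechanism --- iterability of transitions in a one-state machine --- is the same in both proofs; yours just needs the slightly more careful two-head version of it.
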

\begin{proof} On the one hand, the inclusion is trivial by definition. On the other hand, for the properness, let us consider the witness language the regular language
$L= 0^* + 0^*10^*$ which is also a necklace language.
 Let $M=(\{0,1\},\{p,q\},p,\{p,q\},\delta)$ with three transitions $p\in\delta(p,0,\lambda)$, $q\in\delta(p,1,\lambda)$ and $q\in\delta(q,0,\lambda)$, then $L(M) = L_w(M) = L$, moreover $M$ is a sensing \textbf{F}  $5' \rightarrow 3'$ WK automaton (in fact it is also \textbf{FS} and \textbf{F1}).
 To show the properness, we need to show that there is no  sensing \textbf{N}  $5' \rightarrow 3'$ WK automaton that weakly accepts $L$. The proof goes by contradiction, thus let us assume that $M'$ is an automaton with its sole state $q'$ and transition mapping $\delta'$ such that $L_w(M') = L$.
As the word (and also a necklace) $1$ is accepted, $M'$ must have at least one of the transitions $q'\in\delta'(q',1,\lambda)$ and  $q'\in\delta'(q',\lambda,1)$. However, in either case, the words (and necklaces) $11$ and $111$ are also accepted.
However, as they are not in $L$, we have reached a contradiction. This contradiction shows that $L$ is not weakly accepted by any sensing \textbf{N} $5' \rightarrow 3'$ WK automata, and the proof is complete. \qed
\end{proof}

\begin{theorem}\label{thm:5}
The class $\mathcal{W}_{F}$ properly includes the class $\mathcal{W}_{FS}$:
$$\mathcal{W}_{FS} \subsetneq \mathcal{W}_{F}.$$
\end{theorem}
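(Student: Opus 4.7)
The plan is to exhibit a necklace language $L$ witnessing strict inclusion; the inclusion $\mathcal{W}_{FS} \subseteq \mathcal{W}_F$ is immediate from the definitions, so what remains is to find $L \in \mathcal{W}_F \setminus \mathcal{W}_{FS}$. A natural candidate is $L = cycl(\{a^n b^n : n \in \mathbb{N}_0\})$: it consists of $\lambda$ together with all conjugates of $a^n b^n$ for $n \geq 1$, i.e., the words of the form $a^i b^n a^{n-i}$ or $b^j a^n b^{n-j}$ with $n \geq 0$, $0 \leq i, j \leq n$; equivalently, the words with equal numbers of $a$s and $b$s and at most two maximal blocks of each letter.

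For $L \in \mathcal{W}_F$, I would exhibit the one-state all-final automaton $M = (\{a,b\},\{p\},p,\{p\},\delta)$ whose only transition $p \in \delta(p,a,b)$ reads one $a$ with the left head and one $b$ with the right head in the same step. On input $a^n b^n$ the heads peel off a matching pair in each step and meet in the (final) state $p$ after $n$ steps, so $L(M) = \{a^n b^n : n \in \mathbb{N}_0\}$ and $L_w(M) = cycl(L(M)) = L$. This transition is not simple, since both heads move simultaneously.

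The hard direction is $L \notin \mathcal{W}_{FS}$. Suppose for contradiction that some sensing \textbf{FS} $5'\rightarrow 3'$ WK automaton $M'$ weakly accepts $L$. Since $L$ is cyclically closed, $L_w(M') = L$ forces $L(M') \subseteq L$, and for every $n \in \mathbb{N}_0$ at least one conjugate of $a^n b^n$ must lie in $L(M')$, so $L(M')$ is infinite. The key consequence of the all-final restriction is: whenever $M'$ has a transition $q \in \delta(q_0, x, \lambda)$ or $q \in \delta(q_0, \lambda, y)$ out of the initial state, applying it alone to the input $x$ (resp.\ $y$) already accepts that word because $q$ is final; hence $x, y \in L(M') \subseteq L$. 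In particular, the first step of every accepting computation on any $w \in L$ must read a prefix or a suffix of $w$ that itself already lies in $L$.

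I would then enumerate these candidates. For an unmixed conjugate $w \in \{a^n b^n, b^n a^n\}$ the only prefix or suffix of $w$ in $L$ is $\lambda$ or $w$ itself, forcing a single-transition acceptance and hence infinitely many distinct transitions of $M'$ if infinitely many unmixed conjugates are accepted, a contradiction. For a mixed conjugate $a^i b^n a^{n-i}$ (with $0 < i < n$) the only additional candidates are $a^i b^i$ as an $L$-prefix and $b^{n-i} a^{n-i}$ as an $R$-suffix, and each of these splits $w$ into two unmixed halves; iterating the argument from the state reached after the first step forces the remaining unmixed half to be consumed in one more transition. Thus any accepted word of $L$ uses at most two transitions of $M'$, and a pigeonhole on the finite transition set of $M'$ bounds $|L(M')|$. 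The main obstacle is ruling out loops in the state graph of $M'$ that could in principle pump $L(M')$ to infinity; I expect to dispatch this by a case analysis showing that any non-trivial loop in a simple all-final computation either inserts a substring that breaks the at-most-two-blocks structure of $L$, or, since every intermediate state is accepting and the heads move one at a time, causes a shorter input to be accepted with unequal numbers of $a$s and $b$s. Either way $L(M') \not\subseteq L$, and in total $L(M')$ is forced to be finite, contradicting that $cycl(L(M')) = L$ is infinite.
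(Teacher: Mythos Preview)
Your witness language and the $\mathcal{W}_F$-direction automaton coincide exactly with the paper's choice (the paper writes $0,1$ for your $a,b$), and for the hard direction the paper also relies on the fact that in an all-final simple automaton the word (left-read)(right-read) lies in $L(M')\subseteq L$ after every step. The paper merely packages this more directly: it fixes $n=3r$ (with $r$ the maximal transition length), takes any accepted conjugate, and shows that already after the \emph{second} reading step the partial word must have unequal letter counts---so no global ``at most two transitions'' bookkeeping is needed.

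Two remarks on your version. First, ``iterating the argument from the state reached after the first step'' is not literally correct: from the state $q_1$ reached after reading $a^ib^i$, the constraint is not that the newly read piece itself lies in $L$, but that $a^ib^i$ concatenated with the new prefix (or followed by the new suffix) lies in $L$; it is \emph{this} that forces the second reading step to consume the entire remainder $b^{n-i}a^{n-i}$. You should state that explicitly rather than appeal to iteration. Second, your closing paragraph about loops is unnecessary hedging: once every accepted conjugate is consumed in at most two reading transitions, its length is at most $2r$, so $L(M')$ is finite and the contradiction with $cycl(L(M'))=L$ being infinite is immediate---no separate loop analysis is required.
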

\begin{proof} %
We need to show only the properness, thus let us have the witness language $L = \{0^i 1^j 0^k ~|~ 
j=  i+k \} \cup \{1^i 0^j 1^k ~|~  i+k = j \}$. $L$ is the cyclic closure of the linear context-free language $\{0^{n} 1^{n}~|~ n\in \mathbb N_0 \} $. Clearly, as the automaton $(\{0,1\},\{q\},q,\{q\},\delta)$ with a sole transition $q\in\delta(q,0,1)$ accepts the above mentioned linear context-free language, it also weakly accepts $L$. This automaton is a sensing \textbf{N} $5' \rightarrow 3'$ WK automaton, and thus, it is also  a sensing \textbf{F} $5' \rightarrow 3'$ WK automaton. 
Thus, we need to show only that $L$ cannot be weakly accepted by any sensing \textbf{FS} $5' \rightarrow 3'$ WK automaton. The proof is by contradiction, thus let us assume that $M$ is a sensing  \textbf{FS} $5' \rightarrow 3'$ WK automaton such that $L_w(M) = L$. 
            For each WK automaton, as its transition function gives nonempty sets only for finitely many triplets, there is a maximal length of strings that can be read in a computation step. Let $r$ be this maximal length for automaton $M$.
 Let us consider the word $w = 0^{3r}1^{3r} \in L$.
   Since the length of $w$ is large, $M$ needs more than three computation steps to accept one of its conjugates,
   let us say $u=0^{i}1^{3r}0^{3r-i}$ (or symmetrically, $v=1^{j}0^{3r}1^{3r-j}$; in this latter case the proof is analogous to the case we present here for $u$).
   Now, on the one hand, as $M$ is  \textbf{S} $5' \rightarrow 3'$ WK automaton, exactly one of the heads can move in each computation step, thus always a prefix or a suffix of the (remaining) input is processed (and as the input must be processed, there must be computation steps by reading the input).
  On the other hand, $M$ is also   \textbf{F} $5' \rightarrow 3'$ WK automaton, thus any computation step leads to the acceptance of the word composed by the already read prefix and suffix of the input.
   Therefore, there are two cases. 
   
   If the prefix, let us say $x$ is read in the first step (when input letter is processed), then the prefix $x$ of $u$ must also be in $L$, thus it must also contain at least one occurrence of $0$s and also of $1$s: $x=0^i 1^i$ ($i<r$) must hold, and the remaining input is $1^{3r-i}0^{3r-i}$ (where $3r-i > 2r$).  Now, in the next step (of the accepting computation of $u$ when some input letters are processed) again a prefix or a suffix of the remaining input is read, however, both the block of $0$s and $1$s are so large that either only $1$s are read (prefix case) or only $0$s are read (suffix case). Both lead to the acceptance of some words and necklaces where the number of $0$s and $1$s mismatch, and thus this leads to a contradiction.  
   
  In the second case, if the suffix $y$ of $u$ is read in the first computation step (of the accepting computation of $u$, when at least one letter is processed), then as $y$ is accepted by $M$, $y\in L$ must also hold, and thus $y$ must contain also both $0$ and $1$: $y = 1^{3r-i}0^{3r-i}$ (and in this case $i>r$). The remaining input after this step is $0^{i} 1^{i}$.
  Now, by the second step (of the computation consuming input letter(s)), either the prefix or the suffix of this remaining input is read, but with  length at most $r$, meaning that either only $0$s or only $1$s can be read. But this would lead again to an acceptance of a word (and thus to the weak acceptance of a necklace) that has mismatching numbers of $0$s and $1$s. 
This fact contradicts to our assumption, %
hence $L$ cannot be weakly accepted by any sensing \textbf{FS} $5' \rightarrow 3'$ WK automata and thus the proof is complete. %
\qed\end{proof}

Finally, we present our last hierarchy result of the section by showing that all-final automata are weaker than the unrestricted variants in the term of weakly accepting language classes. 
\begin{theorem}\label{thm:6}
The class $\mathcal{W}_{*}$ properly includes the class $\mathcal{W}_{F}$:
 $$\mathcal{W}_{F} \subsetneq \mathcal{W}_{*} .$$
\end{theorem}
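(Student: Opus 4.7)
The plan: Since the inclusion $\mathcal{W}_F\subseteq\mathcal{W}_*$ is immediate from the definitions (any all-final sensing $5'\rightarrow 3'$ WK automaton is, in particular, an unrestricted such automaton), I focus on properness, which I establish by exhibiting a witness $L\in\mathcal{W}_*\setminus\mathcal{W}_F$.

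The key structural observation is that every sensing all-final $5'\rightarrow 3'$ WK automaton $M'=(T,Q,q_0,Q,\delta)$ must weakly accept $\lambda$. Indeed, the all-final condition $F=Q$ forces $q_0\in F$, so the zero-step computation $(q_0,\lambda)\Rightarrow^*(q_0,\lambda)$ is already an accepting computation on the unique conjugate of the empty necklace, and hence $\lambda\in L_w(M')$. Equivalently, every language in $\mathcal{W}_F$ must contain the empty word.

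Given this observation, any necklace language in $\mathcal{W}_*$ that omits $\lambda$ will serve as a witness. A concrete choice is $L=\{01,10\}=cycl(\{01\})$. To see $L\in\mathcal{W}_*$, consider the sensing $5'\rightarrow 3'$ WK automaton $M=(\{0,1\},\{q_0,q_f\},q_0,\{q_f\},\delta)$ with single transition $q_f\in\delta(q_0,0,1)$: the conjugate $01$ is accepted directly by the one-step computation that reads $0$ with the left head and $1$ with the right head, so both $01$ and its cyclic shift $10$ are weakly accepted. Since $\lambda\notin L$ while every language in $\mathcal{W}_F$ contains $\lambda$, we conclude $L\notin\mathcal{W}_F$, establishing the proper inclusion.

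Since the argument rests entirely on the structural observation that all-final automata cannot avoid accepting $\lambda$, there is no substantial obstacle: both the presence of $L$ in $\mathcal{W}_*$ and its absence from $\mathcal{W}_F$ follow immediately from that remark. If a more robust witness is desired (one whose separation is not purely a matter of the empty word), the same argument applies to the larger language $cycl(\{0^n1^n~|~n\geq 1\})$, which is the cyclic closure of a linear context-free language (hence in $\mathcal{W}_*$ by Proposition~1) and likewise omits $\lambda$.
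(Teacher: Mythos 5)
Your proof is correct under the paper's definitions, but it takes a genuinely different and much more elementary route than the paper. Your key observation --- that an all-final automaton has $q_0\in F$, so the zero-step computation $(q_0,\lambda)\Rightarrow^*(q_0,\lambda)$ forces $\lambda\in L_w(M')$ for every \textbf{F} automaton $M'$ --- is valid, since $\Rightarrow^*$ is reflexive and nothing in the definition requires a transition to be applied; hence any necklace language omitting $\lambda$, such as your $\{01,10\}=cycl(\{01\})$, separates $\mathcal{W}_F$ from $\mathcal{W}_*$. The paper instead uses the witness $cycl(\{0^n10^n1 \mid n\in\mathbb N\})$ and a structural argument: letting $r$ bound the lengths readable in one step, in any conjugate of $0^{2r}10^{2r}1$ the two occurrences of $1$ are at cyclic distance $2r+1$, so the first step reads at most one of them, and since every step of an all-final automaton already yields an accepted word, $L_w(M)$ would contain a necklace with fewer than two $1$s. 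What the paper's argument buys is robustness: it shows a genuine limitation of all-final automata (they cannot enforce two far-apart markers, because their accepted languages are closed under truncating a computation), and it would survive any convention that ignores the empty word --- indeed adjoining $\lambda$ to the paper's witness changes nothing in its proof. Your separation, by contrast, is purely an empty-word artifact: $\{\lambda,01,10\}$ is easily in $\mathcal{W}_F$, and your ``more robust'' alternative $cycl(\{0^n1^n\mid n\ge 1\})$ is also only excluded from $\mathcal{W}_F$ because it misses $\lambda$ (the paper's proof of Theorem~\ref{thm:5} exhibits a stateless, hence all-final, automaton weakly accepting $cycl(\{0^n1^n\mid n\ge 0\})$), so that remark overstates what your second witness achieves. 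In short: correct and shorter, but it proves a weaker, definition-sensitive fact, whereas the paper's proof isolates a structural weakness of the all-final restriction.
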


\begin{proof} Again, we need to prove only properness. Consider the witness language
 $L=\{0^i 10^n 1 0^j ~\mid~ n\in \mathbb N, i,j\in\mathbb N_0, \ i+j = n\}$.
 As $L$ is the cyclic closure of the linear language $\{0^n 10^n 1 ~\mid~ n\in \mathbb N\}$, it is in $\mathcal{W}_*$.
Now, on the other hand, we show that there is no sensing \textbf{F} $5'\rightarrow3'$ WK automaton which weakly accepts $L$. The proof is by contradiction. Thus, let us assume that the language $L$ is weakly accepted by a sensing \textbf{F} $5' \rightarrow 3'$ WK automaton, say $M$. For each WK automaton, as its transition function gives nonempty sets only for finitely many triplets, there is a maximal length of strings that can be read in a computation step. Let $r$ be this maximal length for automaton $M$. %
Let us consider the necklace $w_{\circ} = (0^{2r}10^{2r}1)_\circ \subset L$. %
In any of the conjugates of $0^{2r}10^{2r}1$, the distance of the two occurrences of $1$s is $2r$ implying that at most one of them can be read in the first step of the computation. However, 
  as $M$ is all-final, each computation step leads to an accepted word, and thus, to a weakly accepted necklace. Therefore, as $L(M)$ must contain a word containing at most one $1$, $L_w(M)$ has a necklace containing less than two occurrences of $1$ which is contradicting to the assumption that $L_w(M) = L$.  
\qed\end{proof}

The hierarchy results of this section will be summarized on a Hasse diagram in the concluding section.

\section{On strongly accepted necklace language classes}
\label{s:WK}

In this section we use the strong acceptance mode, i.e., a necklace is in the accepted language if and only if all of its conjugates are accepted by the automaton. By understanding the acceptance mode, and knowing that sensing $5'\to 3'$ WK automata accept exactly the languages of $\mathcal{L}_{LIN}$ (\cite{DNA2008,Nagy2013}),
we can deduce the following fact.

\begin{proposition}
Let $L \in \mathcal{L}_{LIN}$ be a linear context-free language. The maximal necklace language $L' \subset L$ contains exactly those words (necklaces) for which all conjugates (members) are in $L$. Then there is a sensing $5'\to 3'$ WK automaton $M$ that accepts $L$, further, for this automaton $M$, $L_s(M) = L'$.

Moreover, the statement hold also in the other direction:
Let $M'$ be a $5'\to 3'$ WK automaton. The strongly accepted necklace language $L_s(M')$ is the maximal necklace language $L$ such
that $L\subset L(M')$ holds.
\end{proposition}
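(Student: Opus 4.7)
The plan is to reduce both directions to the known equivalence $\mathcal{L}_{LIN}=\mathcal{L}_{*}$ (sensing $5'\to 3'$ WK automata accept exactly the linear context-free languages) and then unfold the definition of strong acceptance. The only non-trivial bookkeeping is the characterization of the maximal necklace sublanguage of an arbitrary language $L$, which I would state first as a small observation: the set $L^{\circ}=\{w\in T^{*}\mid u\in L \text{ for every } u\in w_{\circ}\}$ is closed under cyclic shift (conjugates of conjugates are conjugates), it is contained in $L$, and any necklace language $L''\subset L$ must be contained in $L^{\circ}$ (otherwise some $w\in L''\setminus L^{\circ}$ would force, by closure, a conjugate of $w$ outside $L$ into $L''\subset L$, a contradiction). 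Thus $L^{\circ}$ is precisely the maximal necklace language included in $L$.

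For the forward direction, I would invoke the cited equivalence to pick a sensing $5'\to 3'$ WK automaton $M$ with $L(M)=L$. By the definition of $L_{s}(M)$, a word $w$ lies in $L_{s}(M)$ iff every element of $w_{\circ}$ admits an accepting computation of $M$, i.e., iff $w_{\circ}\subseteq L(M)=L$. By the observation above this is exactly the defining condition for membership in $L'=L^{\circ}$, so $L_{s}(M)=L'$.

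For the reverse direction, let $M'$ be any sensing $5'\to 3'$ WK automaton and set $L''=L(M')$, which is a linear context-free language by the same equivalence. The very same unfolding gives
\[
L_{s}(M') \;=\; \{\,w\in T^{*}\mid w_{\circ}\subseteq L(M')\,\} \;=\; (L'')^{\circ},
\]
which by the preliminary observation is the maximal necklace language contained in $L(M')$.

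The main obstacle is essentially cosmetic: one has to be careful that the word ``maximal'' is understood in the set-theoretic sense (largest necklace sublanguage), not merely maximal among some ascending chain, and that the equivalence $\mathcal{L}_{LIN}=\mathcal{L}_{*}$ is applied in both directions (to realize an arbitrary linear $L$ as $L(M)$, and conversely to know that $L(M')$ is linear). No new automaton construction is needed: the same $M$ that witnesses $L\in \mathcal{L}_{LIN}$ already witnesses the strong acceptance of $L'$, because strong acceptance only quantifies universally over conjugates of the input and does not alter the transition behaviour.
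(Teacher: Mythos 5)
Your proof is correct and takes the same route the paper intends: the paper states this proposition as a direct consequence of the definitions of strong acceptance together with the known equivalence $\mathcal{L}_{LIN}=\mathcal{L}_{*}$, which is exactly what you do by unfolding $L_{s}(M)=\{w \mid w_{\circ}\subseteq L(M)\}$ and identifying this with the maximal necklace sublanguage. Your explicit observation characterizing that maximal sublanguage is a slightly more careful writeup of what the paper leaves implicit, but it is the same argument.
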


 Now we introduce a notion for necklaces. If there is a subword $x$ that occurs in some of the conjugates of $w$, then we say that $x$ is a \textit{pattern} in the necklace $w_{\circ}$. If this pattern can be written as $x = u' v'$, then we say it fits to the necklace in the (cut) point that defines the conjugate $w'$ in $w_{\circ}$ such that $u'$ is suffix and $v'$ is a prefix of $w'$. Actually, we can see that one part of $x$ is the prefix and the rest is the suffix of this conjugate. Notice that depending on the length of the pattern there are usually more than one positions where it fits.
 
   We give an example to help the reader to easily catch the concept.
 \begin{example}
 Let the necklace be defined by the word $abcabcaaacb$. Then we have a pattern $aacba$ in it, as it is a subword of, e.g., the conjugate $bcabcaaacba$ (especially, it is a suffix here).
  Now, this pattern fits to the necklace to any points where it occurs, e.g., if we ``cut'' the necklace to obtain the conjugate
  $cbabcabcaaa$, then $u' = aa  $ and $v' = cba$, thus our pattern is used as $aa \cdot cba$.
 \end{example}        
 
Because of the special acceptance mode, we have a kind of locally testable property of all these languages.
(See \cite{McNau-loc-test,LocTest} for related concepts and language families defined in this way.)

\begin{proposition}\label{pro-pat}
Let $L$ be a necklace language strongly accepted by some $5'\to 3'$ WK automata. 
Then there is a finite set of patterns such that for each position of the necklace at least one of them must fit.  
\end{proposition}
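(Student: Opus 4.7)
The plan is to extract a finite set of patterns $P$ directly from the transition function of a sensing $5'\rightarrow 3'$ WK automaton $M$ with $L_s(M) = L$. The finiteness of $\delta$ (nonempty on only finitely many triplets) is what makes $P$ finite, while the fact that any accepting computation must eventually consume every input letter is what ties each cut point of each strongly accepted necklace to some element of $P$.

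First I would deal with potential $\lambda$-transitions preceding the initial letter-reading step of any accepting computation. Let $Q_0$ be the set of states reachable from $q_0$ by (possibly empty) sequences of transitions of the form $q' \in \delta(q, \lambda, \lambda)$, and put
$$P \ =\ \{\, yx \ \mid\ q \in Q_0,\ \delta(q, x, y) \neq \emptyset,\ (x,y) \neq (\lambda, \lambda)\,\}.$$
This set is finite because $Q_0$ is finite and only finitely many nontrivial transitions exist; moreover, the length of every pattern in $P$ is bounded by the fixed value $r$ that bounds $|x| + |y|$ across all transitions of $M$.

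Next I would verify that $P$ has the required local testability property. Fix a nonempty necklace $w_\circ \subseteq L$ (the empty-word case is vacuous) together with an arbitrary cut point producing conjugate $w'$. Since $L = L_s(M)$, there is an accepting computation $(q_0, w') \Rightarrow^{*} (q_f, \lambda)$, which without loss of generality may be taken acyclic and hence finite. Because $w'$ is nonempty, some step of this computation must consume at least one letter; at the earliest such step the automaton has reached some state $q \in Q_0$ via prior $\lambda$-moves and then applies a transition $q' \in \delta(q, x, y)$ with $(x,y) \neq (\lambda, \lambda)$. By the computation semantics, $x$ is a prefix and $y$ is a suffix of $w'$ with $|x| + |y| \le |w'|$, so the pattern $yx \in P$ fits at the chosen cut point via the decomposition $u' = y$, $v' = x$ given in the paper.

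The main obstacle I anticipate is the correct treatment of those initial $\lambda$-transitions: a naive version of $P$ considering only transitions fired from $q_0$ itself could be too small, since the ``witnessing'' letter-reading step may occur only after one or more silent moves, and the pattern it produces would then be missed. Closing the set of admissible starting states under $\lambda$-reachability to obtain $Q_0$ repairs this. A minor further check is that the computation can be assumed acyclic so the first letter-reading step is well-defined; the remaining steps of the argument are essentially a direct unfolding of definitions.
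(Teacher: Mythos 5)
Your proposal is correct and follows essentially the same route as the paper: the patterns are the concatenations $u'\cdot v'$ (suffix read by the second head, prefix read by the first) arising from the finitely many transitions usable at the start of an accepting computation, and strong acceptance forces one of these finitely many patterns to fit at every cut point. Your closure of the initial state under $(\lambda,\lambda)$-moves is only a minor refinement of the paper's argument, which takes the transitions from $q_0$ directly (a silent transition there would merely contribute an empty pattern that fits trivially, so the statement still holds).
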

 \begin{proof}
   As, there must be an accepting computation for each conjugate of a word of the language $L$, for every (starting) point, one of the possible transitions from the initial state must match. Let us analyze the case formally.
   Let $w \in L$ (i.e., $w_{\circ} \subset L$). Then for each starting point the computation could start, i.e., for each conjugate $w'$ of $w$, there must be a suffix $u'$ and a prefix $v'$ of $w'$ such that there is a transition with them, i.e., $\delta(q_0, v', u')\ne \emptyset$. That means that the pattern $u' \cdot v'$ %
   fits to this cut point of the necklace.  
   On the one hand, there are finitely many possible transitions from the state $q_0$ giving finitely many patterns. On the other hand, for each position at least 
one of them must match to have an accepting computation for that conjugate.   
 \end{proof}  

In some special cases, e.g., if the heads read the same length subwords in each transition, the relation with some classes of locally testable languages can be more immediate. 

On the other hand, the property stated in the previous proposition must hold for each language in $\mathcal{S}_*$, but for some languages there could be more (meaning more complex) restrictions as we can see later.

Now we turn to present some hierarchy results among the corresponding necklace language classes.
As the very first result in this line, we show that even the most restricted class is not empty, i.e., 
there are languages in $\mathcal{S}_{N1}$. Actually, we show more, we give a full characterization 
of this class.

\begin{theorem}\label{prop:sN1}
A necklace language $L$ is in $\mathcal{S}_{N1}$ if and only if $L = T_1^* \cup T_2^*$ for two %
 alphabets $T_1, T_2$. 
\end{theorem}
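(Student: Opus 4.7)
The plan is to observe that a sensing \textbf{N1} $5'\to 3'$ WK automaton is completely determined by which single letters each of its two heads is allowed to read in a loop on the sole state $q_0$, then to compute the language $L(M)$ and from it the strongly accepted language $L_s(M)$.

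First, given any \textbf{N1} automaton $M$, let
$$T_1 = \{a \in T \mid q_0 \in \delta(q_0, a, \lambda)\}, \qquad T_2 = \{a \in T \mid q_0 \in \delta(q_0, \lambda, a)\}.$$
I would then argue that $L(M) = T_1^* T_2^*$: at every step of a computation on $w = a_1 \cdots a_n$ either the upper head reads the next leftmost letter (which must belong to $T_1$) or the lower head reads the next rightmost letter (which must belong to $T_2$), and the heads meet exactly when the whole word is consumed. Hence a word is accepted iff it admits a split $w = uv$ with $u \in T_1^*$ and $v \in T_2^*$. This is a short direct verification from the definition of $5'\to 3'$ computation.

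For the forward direction I then compute $L_s(M)$. Clearly every word in $T_1^*$ (resp.\ $T_2^*$) has all its conjugates in $T_1^*$ (resp.\ $T_2^*$), hence in $T_1^* T_2^* = L(M)$, so $T_1^* \cup T_2^* \subseteq L_s(M)$. For the reverse inclusion, suppose $w \in L_s(M)$ but $w \notin T_1^* \cup T_2^*$. Then $w$ contains a letter $a \notin T_2$ and a letter $b \notin T_1$. If either $a$ or $b$ lies outside $T_1 \cup T_2$ the word is not even in $L(M)$, contradiction; so $a \in T_1 \setminus T_2$ and $b \in T_2 \setminus T_1$. Now take the conjugate $w'$ of $w$ that begins immediately after an occurrence of $a$, so that $a$ appears as the last letter of $w'$ and $b$ appears somewhere strictly before it. For $w' \in T_1^* T_2^*$ we would need every $T_1\setminus T_2$-letter to precede every $T_2\setminus T_1$-letter in $w'$, but here $b$ precedes $a$; contradiction. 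Thus $L_s(M) = T_1^* \cup T_2^*$.

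For the converse direction, given arbitrary $T_1, T_2 \subseteq T$, I construct $M = (T, \{q_0\}, q_0, \{q_0\}, \delta)$ with $\delta(q_0, a, \lambda) = \{q_0\}$ for each $a \in T_1$ and $\delta(q_0, \lambda, a) = \{q_0\}$ for each $a \in T_2$; the analysis above then gives $L_s(M) = T_1^* \cup T_2^*$, completing the characterisation. The one step that requires genuine argument (as opposed to routine unwinding of definitions) is the cyclic reshuffling argument in the forward direction that forces all the "exclusive" letters to collapse into a single side $T_i$; the rest is bookkeeping.
\qed
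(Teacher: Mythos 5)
Your proof is correct and follows essentially the same route as the paper: you identify a sensing \textbf{N1} automaton with the two sets of letters its heads can read and show that any necklace containing both a letter of $T_1\setminus T_2$ and a letter of $T_2\setminus T_1$ has a conjugate admitting no accepting computation, which is exactly the paper's argument in both directions. Your explicit intermediate identity $L(M)=T_1^*T_2^*$ and your choice of the conjugate ending in the $T_1\setminus T_2$ letter even make the rejected-conjugate step slightly more watertight than the paper's appeal to an adjacent pattern $a_ib_j$ (which need not literally occur when letters of $T_1\cap T_2$ intervene), but this is a refinement of the same idea rather than a different approach.
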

\begin{proof}
 The proof goes by two parts. First we show that every language of the form $T_1^* \cup T_2^*$ for two %
  alphabets $T_1$ and $T_2$ is in $\mathcal{S}_{N1}$.
 By considering $T_1 = \{a_1,\dots, a_n\}\cup\{c_1,\dots,c_m\}$ and $T_2=\{b_1,\dots,b_k\}\cup\{c_1,\dots,c_m\}$, let us define the automaton $M = (T,\{q\},q,\{q\},\delta)$
with $\delta(q,x,\lambda) = \{q\} $ for each $x \in T_1$ and 
  $\delta(q,\lambda,y) = \{q\} $ for each $y \in T_2$. (For any other triplets let $\delta$ give the empty set.)
Clearly, $M$ is a sensing \textbf{N1} $5'\to 3'$ WK automaton. Moreover, $M$ accepts $T_1^*$ if only the first head is used during the computation and $T_2^*$ if only the second head is used during the computation.
 Now, we show that there is no necklace that can be accepted such that both heads must be used. 
 Contrary, let us assume that
 there is a necklace $w_{\circ}$ which contains letters from both $T_1 \setminus T_2$ and $T_2 \setminus T_1$, then there is a pattern $a_i b_j$ in $w_{\circ}$, i.e., it has a conjugate $b_j u a_i$ (with some $u\in (T_1\cup T_2)^*$). However, there is no transition defined in $M$ to start the computation for this conjugate, thus this necklace cannot be accepted.  
Finally, as $T_1^* \cup T_2^*$ is a necklace language itself, the maximal necklace language in it is also itself, thus
$M$ accepts the necklace language $T_1^* \cup T_2^*$ in strong acceptance mode.

Actually, every sensing \textbf{N1} $5'\to 3'$ WK automaton can be described by two (maybe not disjoint) sets $T_1$ and $T_2$ of letters
having transitions  $\delta(q,a_i,\lambda) = \{q\} $ for each $a_i\in T_1$ and 
  $\delta(q,\lambda,b_j) = \{q\} $ for each $b_j\in T_2$.
  Then, with a similar argument as we used above, one can see that the language $T_1^* \cup T_2^*$ is accepted, and actually, for each accepted word there is a computation where only one of the heads is used to read the entire input. No input can be accepted that has letters that cannot be read by the same head.
\end{proof}

Now, we present some hierarchy results among various classes of strictly accepted necklace languages.

\begin{theorem}
The class $\mathcal{S}_{NS}$ properly includes the class $\mathcal{S}_{N1}$: $$\mathcal{S}_{N1} \subsetneq \mathcal{S}_{NS} .$$
\end{theorem}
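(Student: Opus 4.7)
The plan is straightforward: since the inclusion $\mathcal{S}_{N1} \subseteq \mathcal{S}_{NS}$ is immediate from the definitions, I only need to exhibit a witness necklace language that lies in $\mathcal{S}_{NS}\setminus \mathcal{S}_{N1}$. A natural candidate is the unary even-length language $L = \{a^{2n}~|~n \in \mathbb N_0\}$, which is trivially a necklace language (over a unary alphabet every conjugate class is a singleton), and whose ``two-letters-at-a-time'' structure fits an \textbf{NS} automaton perfectly while being out of reach of an \textbf{N1} automaton that must consume only one letter per step.

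For the positive direction, I would construct $M = (\{a\}, \{q\}, q, \{q\}, \delta)$ with the single transition $q \in \delta(q, aa, \lambda)$. Since $M$ has a single state and only the first head moves in this transition, $M$ is a sensing \textbf{NS} $5'\to 3'$ WK automaton. A routine verification shows $L(M) = (aa)^*$, and because each necklace over $\{a\}$ consists of exactly one conjugate, strong acceptance coincides with ordinary acceptance: $L_s(M) = L(M) = L$, so $L \in \mathcal{S}_{NS}$.

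For the negative direction, I would invoke the characterization already established in Theorem \ref{prop:sN1}: every language in $\mathcal{S}_{N1}$ has the form $T_1^* \cup T_2^*$ for some alphabets $T_1, T_2$. Suppose toward a contradiction that some sensing \textbf{N1} automaton $M'$ over an alphabet $T \supseteq \{a\}$ satisfies $L_s(M') = L$. Since $L \subseteq \{a\}^*$, the sets $T_1, T_2$ provided by the characterization must both be contained in $\{a\}$, so $T_1^* \cup T_2^*$ is either $\{\lambda\}$ or $\{a\}^*$. Neither equals $(aa)^*$, yielding the contradiction that establishes $L \notin \mathcal{S}_{N1}$ and hence properness.

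I expect no genuine obstacle in this argument; the only mild care is in applying the characterization independently of the automaton's ambient alphabet, and in exploiting the fact that over a unary alphabet strong acceptance collapses to ordinary acceptance, so the equality $L_s(M)=L(M)$ comes for free and no separate analysis of conjugates is required.
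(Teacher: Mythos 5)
Your proposal is correct and follows essentially the same route as the paper: the same witness language $(aa)^*$, the same one-state automaton with the single transition $q\in\delta(q,aa,\lambda)$, and the same appeal to the characterization of $\mathcal{S}_{N1}$ as languages of the form $T_1^*\cup T_2^*$ to rule out acceptance by any sensing \textbf{N1} automaton. Your extra remarks (unary necklaces are singletons, so strong acceptance coincides with ordinary acceptance, and the characterization forces $T_1,T_2\subseteq\{a\}$) just make explicit what the paper leaves to the reader.
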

\begin{proof} The inclusion holds by definition, as all \textbf{N1} automata are also \textbf{NS} automata. To show the properness we give an example.
Consider $M = (\{a\},\{q\},q,\{q\},\delta)$
with $\delta(q,aa,\lambda) = \{q\} $ and 
   $\delta$ gives the empty set for any other triplets.
It is easy to see that both the accepted and the strongly accepted language is $(aa)^*$ which cannot be accepted by any \textbf{N1} $5'\to 3'$ WK automaton as we have shown in Theorem \ref{prop:sN1}.
\end{proof}

\begin{lemma}\label{prop:s-NS}
Let $L$ be a language strongly accepted by a sensing \textbf{NS} $5'\to 3'$ WK automaton. If it contains a nonempty word $a^n$ with some $a\in T$ and $n\in \mathbb N$, then it contains all words of $(a^n)^*$. 
\end{lemma}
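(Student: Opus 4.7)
The plan is to exploit two structural features of \textbf{NS} automata: the unique state $q=q_0\in F$ makes every transition a self-loop of the form $q\in\delta(q,u,\lambda)$ or $q\in\delta(q,\lambda,v)$; and every conjugate of the unary word $a^{nm}$ coincides with $a^{nm}$ itself, so strong acceptance and ordinary acceptance of $a^{nm}$ agree. Together these suggest that any accepting computation on $a^n$ may simply be replayed $m$ times to accept $a^{nm}$, and then strong acceptance follows for free.

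First, from $a^n\in L=L_s(M)$ and $(a^n)_\circ=\{a^n\}$ I would fix an accepting computation $\pi$ of $M$ on $a^n$ and record its step sequence $\tau_1,\dots,\tau_T$, where each $\tau_i$ reads some block $a^{\ell_i}$ with either the upper or the lower head (since $a$ is the only letter that appears on the tape). Write $r_0=n,\,r_1,\dots,r_T=0$ for the lengths of the unread tape before each step of $\pi$. For any $m\geq 1$ I would then build a candidate computation $\pi^{(m)}$ on $a^{nm}$ by executing the sequence $\tau_1,\dots,\tau_T$ exactly $m$ times in a row, starting from the configuration $(q,a^{nm})$. Every step returns to $q$, so the reachability of the accepting configuration $(q,\lambda)$ reduces to checking that no step becomes inapplicable.

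This applicability check is the main (but short) technical point. After $k$ complete passes of $\pi$ exactly $n(m-k)$ letters remain on the tape, and within the $(k{+}1)$-st pass the length of the unread tape just before step $i$ equals $n(m-k-1)+r_{i-1}\geq r_{i-1}\geq \ell_i$, where the last inequality is precisely the condition that made step $\tau_i$ applicable in the original $\pi$ on $a^n$. Hence $\pi^{(m)}$ is a legal accepting computation, so $a^{nm}\in L(M)$; because its only conjugate is itself, $a^{nm}\in L_s(M)=L$. Together with $\lambda\in L$, witnessed by the empty computation since $q_0\in F$, this yields $(a^n)^*\subseteq L$. I do not foresee a real obstacle beyond the bookkeeping above; any $(\lambda,\lambda)$-transitions permitted by the formal definition are harmless and may simply be dropped from $\pi$.
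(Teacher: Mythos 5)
Your proof is correct and follows essentially the same route as the paper's: since the automaton is stateless and a unary word is its own singleton necklace, the accepting computation on $a^n$ can simply be replayed $m$ times to accept $a^{nm}$. Your added bookkeeping (the applicability check and the explicit treatment of $\lambda$) only makes explicit what the paper's shorter argument leaves implicit.
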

\begin{proof}
Any word of the form $a^n$ can be considered as a singleton necklace. Further, as such automaton has only one state, the same computation steps as the ones lead to the acceptance of $a^n$ can be repeated if the input is longer. In this way, each word of $(a^n)^*$ is accepted, thus the language is infinite.
\end{proof}

\begin{lemma}\label{prop:sF1}
Let $L$ be a language of necklaces strongly accepted by a sensing \textbf{F1} $5'\to 3'$ WK automaton. If $L$ contains a nonempty word, then it contains one letter long word(s).
\end{lemma}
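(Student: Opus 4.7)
The plan is to focus on the very first step of an accepting computation and exploit the two defining restrictions simultaneously. Suppose $L$ contains a nonempty word $w$. By the definition of strong acceptance, every conjugate of $w$ is accepted, so in particular there is an accepting computation $(q_0, w) \Rightarrow^* (q_f, \lambda)$ of length $|w| \geq 1$.

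Next I would isolate the first step of this computation. Because the automaton is 1-limited, exactly one letter is read in it: either the upper head consumes the first letter $a_1$ of $w$, giving some $q_1 \in \delta(q_0, a_1, \lambda)$, or the lower head consumes the last letter $a_n$ of $w$, giving some $q_1 \in \delta(q_0, \lambda, a_n)$. Let $a$ denote the letter that is read and $q_1$ the state reached. Because the automaton is all-final, $Q = F$, and so $q_1 \in F$ holds automatically.

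Consequently, the single transition step $(q_0, a) \Rightarrow (q_1, \lambda)$ — admissible since the underlying triplet $(q_0, a, \lambda)$ or $(q_0, \lambda, a)$ is exactly the one just extracted — is itself an accepting computation of $M$ on the one-letter input $a$. Since $a$ has length one, the associated necklace $a_{\circ}$ consists only of $a$ itself, so this lone accepting computation already suffices to witness strong acceptance of this necklace. Hence $a \in L_s(M) = L$, giving the required one-letter member of $L$. I do not foresee any real obstacle: the argument is forced by combining the two defining restrictions with the trivial observation that a one-letter word is its own conjugacy class.
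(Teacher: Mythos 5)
Your proof is correct and follows essentially the same route as the paper: extract the first (one-letter) transition from the initial state, use the all-final property to conclude that this single step already accepts a one-letter word, and note that a one-letter word is a singleton necklace, hence strongly accepted. The only difference is cosmetic — you obtain the existence of such a transition from an accepting computation of the given nonempty word, whereas the paper states it directly — so nothing further is needed.
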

\begin{proof}
 In a sensing \textbf{F1} $5'\to 3'$ WK automaton all states are accepting, and the automaton can read exactly one letter in the first step of the computation. Thus, if it has any transition from the initial state, it will accept the one letter long word containing the letter of the transition. As every one letter long word itself is a singleton necklace, it is also strongly accepted, thus it appears in the strongly accepted necklace language. W.l.o.g., assume that there is a transition with letter $a\in T$ with the first head in $M$, i.e., $\delta(q_0,a,\lambda) \ne\emptyset$. Then $a\in L_s(M)$.   
\end{proof}

\begin{theorem}
The class $\mathcal{S}_{F1}$ properly includes the class $\mathcal{S}_{N1}$: 
$$\mathcal{S}_{N1} \subsetneq \mathcal{S}_{F1} .$$
\end{theorem}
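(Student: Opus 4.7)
The plan is to establish the inclusion directly from the definitions and then exhibit a single separating language. Every sensing \textbf{N1} $5'\to 3'$ WK automaton has $Q=F=\{q_0\}$, which is a special case of $Q=F$, so it is also a sensing \textbf{F1} $5'\to 3'$ WK automaton; hence $\mathcal{S}_{N1}\subseteq\mathcal{S}_{F1}$ holds by definition.

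For properness, I would take the finite necklace language $L=\{\lambda,a\}$ over the one-letter alphabet $T=\{a\}$ and show that it lies in $\mathcal{S}_{F1}\setminus\mathcal{S}_{N1}$. To place $L$ in $\mathcal{S}_{F1}$, I would construct the two-state automaton $M=(\{a\},\{q_0,q_1\},q_0,\{q_0,q_1\},\delta)$ whose only transition is $q_1\in\delta(q_0,a,\lambda)$. This is clearly an \textbf{F1} automaton, since both states are final and each step reads exactly one letter. The only configurations of the form $(q,\lambda)$ reachable from an initial configuration are $(q_0,\lambda)$ and $(q_1,\lambda)$, reached from $(q_0,\lambda)$ and $(q_0,a)$ respectively; no input of length at least two can be processed to completion because $q_1$ has no outgoing transition and $q_0$ admits only the single left-head step on $a$. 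Hence $L(M)=\{\lambda,a\}$, and since both $\lambda$ and $a$ are singleton necklaces we get $L_s(M)=L(M)=L$.

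To exclude $L$ from $\mathcal{S}_{N1}$, I would invoke Theorem~\ref{prop:sN1}: every language in $\mathcal{S}_{N1}$ equals $T_1^{*}\cup T_2^{*}$ for some alphabets $T_1,T_2$. If some $T_i$ contains a letter $c$, then $T_i^{*}\supseteq\{c^n\mid n\in\mathbb N_0\}$ is already infinite; otherwise $T_1=T_2=\emptyset$ and the union collapses to $\{\lambda\}$. Thus every language in $\mathcal{S}_{N1}$ is either $\{\lambda\}$ or infinite, while $L=\{\lambda,a\}$ is a finite language of size two and is neither. I do not expect a real obstacle here, since Theorem~\ref{prop:sN1} already pins down $\mathcal{S}_{N1}$ completely; the only minor thing to verify is that the two-state automaton above really rejects every word of length at least two, which is immediate from the shape of its transition relation.
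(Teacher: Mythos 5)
Your proposal is correct and follows essentially the same route as the paper: the inclusion is noted by definition, and properness is shown by a small finite witness language strongly accepted by an \textbf{F1} automaton whose start state feeds into dead-end final states, excluded from $\mathcal{S}_{N1}$ via the characterization $L=T_1^*\cup T_2^*$. The only difference is the choice of witness ($\{\lambda,a\}$ over a unary alphabet instead of the paper's $\{\lambda,a,b\}$ over $\{a,b\}$), which is an inessential variation.
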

\begin{proof} The inclusion holds by definition, as all \textbf{N1} automata are also \textbf{F1} automata. To show the properness we give an example.
Consider $M = (\{a,b\},\{q,p,r\},q,\{q,p,r\},\delta)$
with $\delta(q,a,\lambda) = \{p\} $ and $\delta(q,b,\lambda) = \{r\}$ 
  (where $\delta$ gives the empty set for any other triplets). 
It is easy to see that both the accepted and the strongly accepted language is $\{\lambda,a,b\}$ which cannot be accepted by any \textbf{N1} $5'\to 3'$ WK automaton. %
\end{proof}

\begin{theorem}
The class $\mathcal{S}_{FS}$ properly includes both of the classes $\mathcal{S}_{F1}$ and
 $\mathcal{S}_{NS}$:
$$\mathcal{S}_{F1} \subsetneq \mathcal{S}_{FS} \text{ \ \ and \ \   }\mathcal{S}_{NS} \subsetneq \mathcal{S}_{FS}.$$
\end{theorem}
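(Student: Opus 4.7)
My plan is to observe that both inclusions follow immediately from the definitions (every stateless automaton is all-final, and every 1-limited automaton is simple), so only the two properness claims require work. For each of them, the strategy is to reuse a witness language that has already appeared in the section, and to invoke the locality lemmas (Lemma \ref{prop:s-NS} and Lemma \ref{prop:sF1}) as the exclusion tool.

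For $\mathcal{S}_{F1} \subsetneq \mathcal{S}_{FS}$, I would take the witness $L = (aa)^*$. On the inclusion side, this is in $\mathcal{S}_{NS} \subseteq \mathcal{S}_{FS}$ via the stateless simple automaton $(\{a\},\{q\},q,\{q\},\delta)$ with the single transition $\delta(q,aa,\lambda) = \{q\}$, which was already analyzed in the proof of $\mathcal{S}_{N1} \subsetneq \mathcal{S}_{NS}$. On the exclusion side, suppose towards a contradiction that some sensing \textbf{F1} $5'\to 3'$ WK automaton $M$ satisfies $L_s(M) = (aa)^*$. Since $(aa)^*$ contains the nonempty word $aa$, Lemma \ref{prop:sF1} forces a one-letter word (necessarily $a$) to lie in $L_s(M)$, contradicting $a \notin (aa)^*$.

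For $\mathcal{S}_{NS} \subsetneq \mathcal{S}_{FS}$, I would reuse the witness $L = \{\lambda, a, b\}$ from the proof of $\mathcal{S}_{N1} \subsetneq \mathcal{S}_{F1}$. The three-state all-final, 1-limited automaton given there shows $L \in \mathcal{S}_{F1} \subseteq \mathcal{S}_{FS}$. To see that $L \notin \mathcal{S}_{NS}$, I would assume for contradiction that some sensing \textbf{NS} $5'\to 3'$ WK automaton strongly accepts $L$. Since $a \in L$ is a nonempty word of the form $a^1$, Lemma \ref{prop:s-NS} (applied with $n=1$) forces all of $a^*$ to lie in $L$; but $aa \notin \{\lambda, a, b\}$, a contradiction.

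The argument is essentially mechanical once the right locality lemmas are lined up with the right witnesses, so I do not expect any serious obstacle. The only mild subtlety worth double-checking is that the containment $\mathcal{S}_{NS} \subseteq \mathcal{S}_{FS}$ really is immediate: a stateless automaton has $Q = F = \{q_0\}$, which trivially satisfies the all-final condition $Q = F$, and simple is by definition a superclass of 1-limited, so the move from \textbf{NS} to \textbf{FS} (and from \textbf{F1} to \textbf{FS}) is pure bookkeeping on the restriction classes.
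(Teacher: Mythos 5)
Your proof is correct, and it leans on the same two exclusion tools as the paper (the lemma that a strongly \textbf{NS}-accepted language containing a nonempty $a^n$ must contain all of $(a^n)^*$, and the lemma that a strongly \textbf{F1}-accepted language containing a nonempty word must contain a one-letter word), but the decomposition is different. The paper constructs a single fresh witness: the two-state \textbf{FS} automaton with transitions reading $aa$, $ab$, $ba$ from the initial state, whose strongly accepted language $\{\lambda, aa, ab, ba\}$ is excluded from $\mathcal{S}_{F1}$ and from $\mathcal{S}_{NS}$ simultaneously by the two lemmas; this shows the slightly stronger fact that $\mathcal{S}_{FS}$ contains a language lying outside the union $\mathcal{S}_{F1}\cup\mathcal{S}_{NS}$. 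You instead recycle two witnesses already established earlier in the section, $(aa)^*\in\mathcal{S}_{NS}$ and $\{\lambda,a,b\}\in\mathcal{S}_{F1}$, and exclude each from the other restricted class; this costs no new automaton construction and, as a bonus, shows that $\mathcal{S}_{F1}$ and $\mathcal{S}_{NS}$ are incomparable, which the paper's single witness does not directly give. Both arguments are sound. One cosmetic point: in your \textbf{F1} exclusion the parenthetical ``necessarily $a$'' is not quite forced, since the hypothetical automaton's alphabet (and hence the one-letter word produced by Lemma \ref{prop:sF1}) need not be $\{a\}$; but the contradiction survives unchanged because $(aa)^*$ contains no one-letter word over any alphabet.
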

\begin{proof} The inclusions hold by definition, as all \textbf{F1} automata and all \textbf{NS} automata are also \textbf{FS} automata. To show the properness we give an example.
Consider $M = (\{a,b\},\{q,p\},q,\{q,p\},\delta)$
with $\delta(q,aa,\lambda) = \{p\} $, $\delta(q,ab,\lambda) = \{p\} $  and $\delta(q,ba,\lambda) = \{p\} $ 
  (where $\delta$ gives the empty set for any other triplets).
It is easy to see that both the accepted and the strongly accepted language is $\{\lambda, aa, ab, ba\}$ includes two nonempty necklaces. This language cannot be accepted by any \textbf{F1} $5'\to 3'$ WK automaton by Lemma \ref{prop:sF1} as each of its nonempty words has length $2$. Moreover, $L_s(M)$ is a finite language containing the nonempty word $aa$, thus by Lemma  \ref{prop:s-NS} it cannot be strongly accepted by any \textbf{NS} $5'\to 3'$ WK automaton.
\end{proof}

The examples we have used so far defined regular languages. To show that the model we are considering here has a larger expressive power, we present the following example, where a non regular (and in fact, not linear context-free) language is defined by an  \textbf{F1} $5'\to 3'$ WK automaton.

\begin{example}
\begin{figure}[ht]
    \centering
        \includegraphics[scale=0.9]{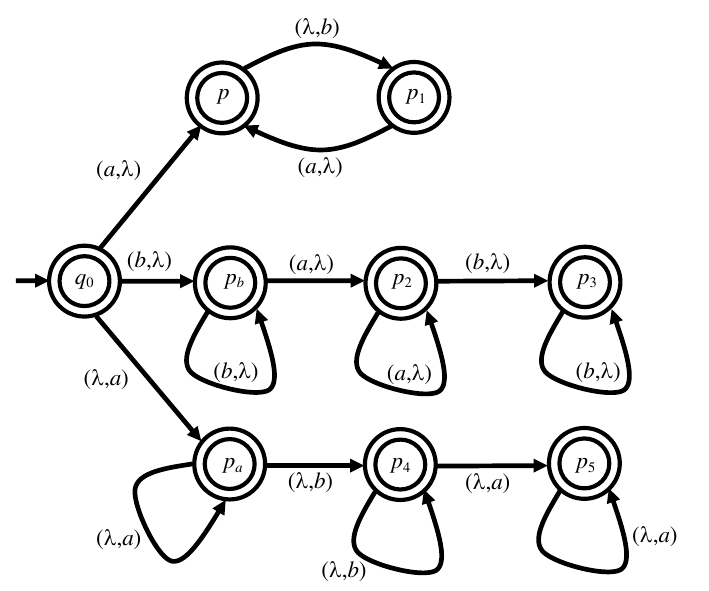}
            \caption{A  sensing \textbf{F1} $5' \rightarrow 3'$ WK automaton that is accepting a non linear context free language of necklaces in the strong mode.}
    \label{fig:F1ex}
\end{figure}
Consider the sensing \textbf{F1} $5'\to 3'$ WK automaton $M$ shown in Figure \ref{fig:F1ex}.
Depending on the first letter of the chosen conjugate, the computation follows different ways and also there is computation based on the last letter.
 If the first letter is $b$, then state $p_b$ is reached, and all continuations belong to $b^* a^* b^*$ are accepted. In this way, clearly all words of $b^*$ are also strongly accepted, as each of them is a singleton necklace.
 Whenever, the last letter of the conjugate is an $a$, there is a computation reaching $p_a$ and the computation continues accepting all words of $a^*b^*a^*$. Here all necklaces containing only $a$-s are also accepted, i.e., the elements of $a^*$ are in $L_s(M)$.  
 If the necklace contains both $a$ and $b$, then it must also be accepted when conjugate starting with $a$ and finishing with a $b$ (having the pattern $b\cdot a$ to fit to this position). However, in this case, the only computation goes from $q_0$ to $p$ and continues by using both heads and counting the number of $a$-s and $b$-s not to have a larger difference than $1$.
Thus, the strongly accepted necklace language is $\{a^*\} \cup \{b^*\} \cup \{a^nb^n\} \cup \{a^{n+1} b^n\} \cup \{ b^k a^n b^m~|~n\in\{k+m, k+m+1\} \}$. This language is not regular, moreover, it is not linear. On the other hand, it is context-free as a PDA can easily count the number of letters in each of the possible conjugates.
\end{example}

In \cite{AFL} it was proven that exactly the class $\mathcal{L}_{LIN}$ of linear context-free languages are accepted by each of the classes of (arbitrary, i.e., unrestricted) sensing $5'\to 3'$ WK automata, of sensing \textbf{S} $5'\to 3'$ WK automata and of sensing \textbf{1} $5'\to 3'$ WK automata. By considering these automata for necklaces in the strong acceptance mode, we have the following consequence on the top of the hierarchy.

\begin{proposition}
$$\mathcal{S}_* = \mathcal{S}_S = \mathcal{S}_1  \supset \mathcal{S}_{F} .$$ 
\end{proposition}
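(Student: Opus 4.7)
The plan is to split the statement into two independent parts and treat them separately: first the chain of equalities $\mathcal{S}_{*}=\mathcal{S}_{S}=\mathcal{S}_{1}$, then the proper inclusion $\mathcal{S}_{F}\subsetneq\mathcal{S}_{*}$.

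For the equalities, the inclusions $\mathcal{S}_{1}\subseteq\mathcal{S}_{S}\subseteq\mathcal{S}_{*}$ hold by definition, since every \textbf{1} automaton is an \textbf{S} automaton and every \textbf{S} automaton is an unrestricted sensing $5'\to 3'$ WK automaton. For the reverse direction I would appeal to the opening proposition of this section, which characterises $L_{s}(M)$ as the maximal necklace language contained in $L(M)$; in particular, $L_{s}(M)$ depends on $M$ only through $L(M)$. Combined with the result from \cite{AFL} that the unrestricted, \textbf{S}, and \textbf{1} sensing $5'\to 3'$ WK automata all accept exactly $\mathcal{L}_{LIN}$, this reduces the reverse inclusion to a single translation step: given $L\in\mathcal{S}_{*}$ witnessed by some $M$, pick an equivalent \textbf{1} automaton $M'$ with $L(M')=L(M)$; then $L_{s}(M')=L_{s}(M)=L$, so $L\in\mathcal{S}_{1}$, and analogously for $\mathcal{S}_{S}$.

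For the proper inclusion, $\mathcal{S}_{F}\subseteq\mathcal{S}_{*}$ is immediate by definition, so only strictness requires work. I would use the witness language $L=\{a\}$, which is a finite (hence linear) necklace language, so $L\in\mathcal{S}_{*}$ by the characterisation above; concretely, the two-state automaton $M=(\{a\},\{q_{0},q_{f}\},q_{0},\{q_{f}\},\delta)$ with the single transition $\delta(q_{0},a,\lambda)=\{q_{f}\}$ realises $L_{s}(M)=\{a\}$. To show $L\notin\mathcal{S}_{F}$, I would observe that in any sensing \textbf{F} WK automaton the initial state is final (since $Q=F$), so the zero-step computation $(q_{0},\lambda)\Rightarrow^{*}(q_{0},\lambda)$ is accepting; therefore $\lambda\in L(M)$, and because the only conjugate of $\lambda$ is itself, $\lambda\in L_{s}(M)$. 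Every language in $\mathcal{S}_{F}$ hence contains $\lambda$, whereas $\{a\}$ does not.

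I do not anticipate any genuine technical obstacle. The equalities amount to a direct translation of previously established facts, and the separation rests on the elementary observation that \textbf{F} automata are forced to accept the empty word. Should a witness that avoids this ``trivial'' empty-word issue be preferred, exactly the same reasoning would apply to any linear necklace language omitting $\lambda$, for example $L=(aa)^{+}$; I would keep $\{a\}$ nonetheless, as it gives the cleanest presentation.
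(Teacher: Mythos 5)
Your treatment of the equalities is essentially the paper's own argument: the paper obtains $\mathcal{S}_*=\mathcal{S}_S=\mathcal{S}_1$ directly from the result of \cite{AFL} that unrestricted, \textbf{S} and \textbf{1} sensing $5'\to 3'$ WK automata all accept exactly $\mathcal{L}_{LIN}$, combined with the opening proposition of Section \ref{s:WK}, by which $L_s(M)$ is the maximal necklace language contained in $L(M)$ and hence depends on $M$ only through $L(M)$; your translation step spells this out correctly, and $\mathcal{S}_F\subseteq\mathcal{S}_*$ is, as you say, definitional. Where you genuinely diverge is in reading $\supset$ as a \emph{proper} inclusion and proving strictness with the witness $\{a\}$. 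The paper does not claim properness here: in the conclusions the relation between $\mathcal{S}_F$ and $\mathcal{S}_*$ is explicitly listed as open, and in the Hasse diagram this edge is drawn as an inclusion whose properness is left undetermined (the paper uses $\subset$ and $\supset$ loosely for not-necessarily-proper inclusion, cf.\ ``$w_{\circ}\subset L$''). Your extra argument is nevertheless sound under the paper's definitions: $Q=F$ forces $q_0\in F$, and since $\Rightarrow^*$ is reflexive, every all-final automaton accepts $\lambda$, so every language in $\mathcal{S}_F$ contains $\lambda$, while $\{a\}\in\mathcal{S}_*$ via the two-state automaton you describe. What this buys, however, is only a degenerate separation at the empty word; the same trick would just as easily ``separate'' $\mathcal{W}_F$ from $\mathcal{W}_*$, where the paper instead constructs a nontrivial witness ($\{0^i10^n10^j\mid i+j=n\}$), which indicates that the substantive question the authors leave open is whether the classes differ beyond the forced acceptance of $\lambda$ (say, on languages containing $\lambda$), and that question your argument does not address.
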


We leave open whether the hierarchy is proper or not for the pair of classes we did not show proofs. A summary of these results can also be seen in the Hasse diagram in Figure \ref{Hasse-s} in the next section.
\begin{figure}[ht]
 \centering
     \includegraphics[scale=.75]{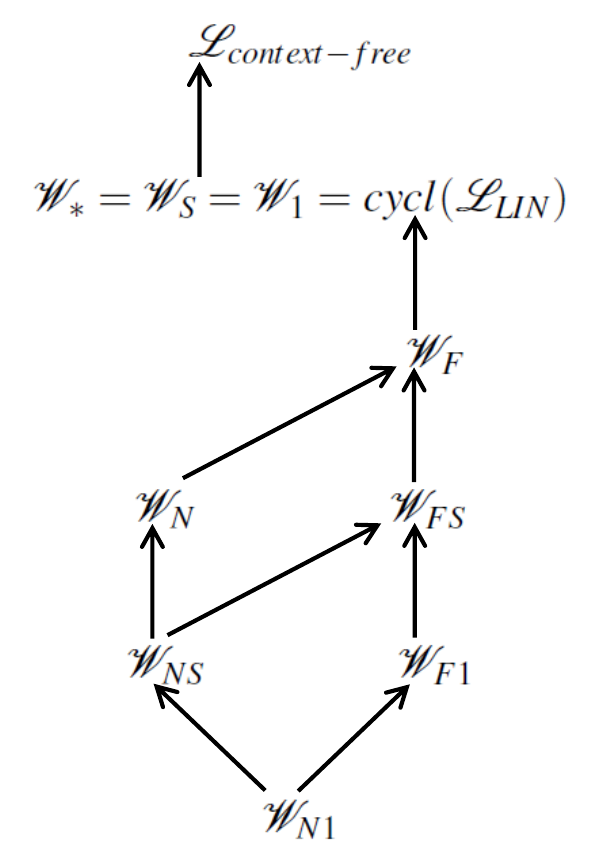}
              \caption{Hierarchy of necklace languages weakly accepted by sensing $5' \rightarrow 3'$ WK finite automata in a Hasse diagram. Each of the shown inclusions is proper.}
    \label{Hasse-w}
\end{figure}
\section{Conclusions}

Necklaces (or circular words) may represent various real word objects, e.g., DNA molecules having circular (also called) cyclic structure. In mathematics and computer science they are often modeled by the set of conjugates, i.e., linear (ordinary) words that could be the base of the cycle.
In this paper, we used WK automata to accept necklaces and necklace languages. Two acceptance modes have been investigated, if at least one of the elements of the conjugate class is accepted, then the corresponding necklace is weakly accepted, while in case all conjugates are accepted, the necklace is strongly accepted.   
Based on the various restrictions of WK automata, we established hierarchies of the accepted language classes. 
We summarize these hierarchy results obtained for necklace languages by Hasse diagrams and we also list a few open problems. 

On the first hand, a Hasse diagram shows the hierarchy of the weakly accepted classes of necklace languages in Figure \ref{Hasse-w}. 
\begin{figure}[t]
 \centering
       \includegraphics[scale=.75]{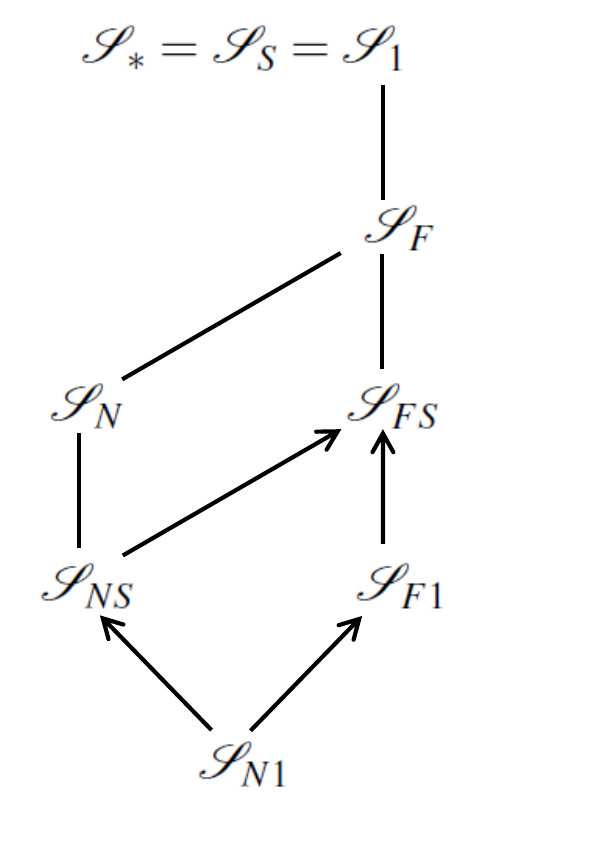}
                                \caption{Hierarchy of necklace languages strongly accepted by sensing $5' \rightarrow 3'$ WK finite automata in a Hasse diagram. Arrows represent proper inclusions, while lines represent inclusions where the properness is  left open.}
    \label{Hasse-s}
\end{figure}
On the other hand, Figure \ref{Hasse-s} shows the Hasse diagram of the language classes of the strongly accepted necklace languages. Here, some of the inclusions are trivial by definition and their properness are left open. More precisely, the relations
(equality or proper inclusion) between the following classes is open: 
\begin{itemize}
\item  $\mathcal{S}_{NS} - \mathcal{S}_N$,
\item  $\mathcal{S}_{FS} - \mathcal{S}_F$,
\item  $\mathcal{S}_N - \mathcal{S}_F$, and
\item  $\mathcal{S}_F - \mathcal{S}_*$.
\end{itemize}
 Further open problems are, e.g., the closure properties of the newly defined language classes. Relations to other families of languages, including locally testable families are also planned to be established in the near future.
%
%


\nocite{*}
\bibliographystyle{eptcs}

\begin{thebibliography}{}
%
%
\bibitem{Adleman}
Leonard M. Adleman (1994): \textit{Molecular computation of solutions to combinatorial problems, Science} 226, pp. 1021--1024, doi:\href{https://doi.org/10.1126/science.7973651}{10.1126/science.7973651}.

\bibitem{Amar} Amar, V., Putzolu, G.R. (1964): \textit{On a family of linear grammars. Inf. Control} 7(3), 283--291, doi:\href{https://doi.org/10.1016/S0019-9958(64)90294-3}{10.1016/S0019-9958(64)90294-3}.

\bibitem{genCycl}
Andreas Brandst\"adt (1981): \textit{Closure Properties of Certain Families of Formal Languages with Respect to a Generalization of Cyclic Closure. RAIRO Theor. Informatics Appl.} 15(3), pp. 233--252, doi:\href{https://doi.org/10.1051/ita/1981150302331}{10.1051/ita/1981150302331}.

\bibitem{Czeizle}
Elena Czeizler \& Eugen Czeizler (2006): \textit{A Short Survey on Watson-Crick Automata, Bulletin of the EATCS} 88, pp. 104--119.

\bibitem{Freund}
Rudolf Freund, Gheorghe  {P\u aun}, Grzegorz Rozenberg \& Arto Salomaa (1997): \textit{Watson-Crick finite automata}.
In: 
 Harvey Rubin \& David Harlan Wood, editors: \textit{DNA Based Computers, Proceedings of a DIMACS Workshop, Philadelphia, Pennsylvania, USA, June 23-25, 1997, DIMACS Series in Discrete Mathematics and
Theoretical Computer Science} 48, DIMACS/AMS, pp. 297--327, doi:\href{https://doi.org/10.1090/dimacs/048/22}{10.1090/dimacs/048/22}.

\bibitem{Hegedus}  L\'aszl\'o Heged\"us \& Benedek Nagy (2013):\textit{ Periodicity of circular words. In: WORDS 2013, Turku, Finland, TUCS Lecture Notes No.} 20 (09.2013), pp. 45--56.

\bibitem{DM} L\'aszl\'o Heged\"us \& Benedek Nagy (2016): \textit{On periodic properties of circular words. Discrete Mathematics} 339(3), pp. 1189--1197, doi:\href{https://doi.org/10.1016/j.disc.2015.10.043}{10.1016/j.disc.2015.10.043}.

\bibitem{HopUl}   John E. Hopcroft \& Jeffrey D. Ullman (1979): \textit{Introduction to Automata Theory, Languages and Computation}. Addison-Wesley, Reading, M.A.

\bibitem{NCMA-bi} 	Ondrej Kl\'{\i}ma \& Libor Pol\'ak (2011):
\textit{On Biautomata}. In Rudolf Freund, Markus Holzer, Carlo Mereghetti, Friedrich Otto, Beatrice Palano (eds.):
\textit{Third Workshop on Non-Classical Models for Automata and Applications -- NCMA 2011, Milan, Italy, July 18 - July 19, 2011.
 Proceedings. books@ocg.at} 282,
  Austrian Computer Society, pp.  153--164.

\bibitem{jumpWK} Radim Kocman, Zbynek Krivka, Alexander Meduna \& Benedek Nagy (2022): \textit{A jumping $5' \to 3'$ Watson-Crick finite automata model. Acta Informatica} 59(5), pp. 557--584, doi:\href{https://doi.org/10.1007/s00236-021-00413-x}{10.1007/s00236-021-00413-x}

\bibitem{Manfred} Manfred Kudlek (2004): \textit{On languages of cyclic words}. In: Natasa Jonoska, Gheorghe P\u aun, Grzegorz Rozenberg (eds.):
\textit{Aspects of Molecular Computing, Essays Dedicated to Tom Head on the Occasion of His 70th Birthday. Lecture Notes in Computer Science,
LNCS} 2950, pp. 278--288, doi:\href{https://doi.org/10.1007/978-3-540-24635-0_20}{10.1007/978-3-540-24635-0_20}.

\bibitem{Kuske}
Dietrich Kuske \& Peter Weigel (2004): \textit{The role of the complementarity relation in Watson-Crick automata
and sticker systems}. In:  Cristian S. Calude, Elena Calude \& Michael J. Dinneen (editors): \textit{Developments in Language Theory, DLT 2004, Lecture Notes in Computer Science,
LNCS} 3340, Springer, Berlin, Heidelberg, pp. 272--283. doi:\href{https://doi.org/10.1007/978-3-540-30550-7 23}{10.1007/978-3-540-30550-7 23}.

\bibitem{Leupold-NCMA} Peter Leupold \& Benedek Nagy (2009): $5' \rightarrow 3'$ Watson-Crick automata with several runs. In: Henning Bordihn, Rudolf Freund, Markus Holzer, Martin Kutrib, Friedrich Otto (eds.):
Workshop on Non-Classical Models for Automata and Applications - NCMA 2009, Wroclaw, Poland, August 31 - September 1, 2009. Proceedings. books@ocg.at 256, Austrian Computer Society 2009, pp.
167--180.

\bibitem{Leupold}
Peter Leupold \& Benedek Nagy (2010): $5' \rightarrow 3'$ \textit{Watson-Crick automata with several runs, Fundamenta
Informaticae} 104, pp. 71--91, doi:\href{https://doi.org/10.3233/FI-2010-336}{10.3233/FI-2010-336}.

\bibitem{R.Louk} Roussanka Loukanova (2007):
\textit{Linear context free languages}. In: Cliff B. Jones, Zhiming Liu, Jim Woodcock (eds.):
\textit{Theoretical Aspects of Computing - ICTAC 2007, 4th International Colloquium, Macau, China, September 26-28, 2007, Proceedings. Lecture Notes in Computer Science} 4711, Springer 2007, pp. 351--365, 
doi:\href{https://doi.org/10.1007/978-3-540-75292-9_24}{10.1007/978-3-540-75292-9_24}.

\bibitem{McNau-loc-test} Robert McNaughton \& Seymour Papert (1971): \textit{Counter-Free Automata}. MIT Press. 

\bibitem{DNA2008} Benedek
Nagy (2008): \textit{On $5' \rightarrow 3'$ sensing Watson-Crick finite automata}, In: Garzon M.H. \& Yan H. (eds.):
\textit{DNA Computing. DNA 2007: Selected revised papers, Lecture Notes in Computer Science, LNCS} 4848,
Springer, Berlin, Heidelberg, pp. 256--262. doi:\href{https://doi.org/10.1007/978-3-540-77962-9_27}{10.1007/978-3-540-77962-9_27}.


%
\bibitem{CiE} 
Benedek Nagy (2009): \textit{On a hierarchy of $5' \rightarrow 3'$ sensing WK finite automata languages}, In: \textit{Computaility
in Europe, CiE 2009: Mathematical Theory and Computational Practice, Abstract Booklet, Heidelberg}, pp.
266--275.

%
\bibitem{iConcept} Benedek
Nagy (2010): $5' \rightarrow 3'$ \textit{sensing Watson-Crick finite automata},  In: Gabriel Fung (ed.):
\textit{Sequence and Genome Analysis II - Methods and Applications}, pp. 39–-56, iConcept Press.

%
\bibitem{Triangle} Benedek Nagy (2012): \textit{A class of $2$-head finite automata for linear languages. 
Triangle $8$: llenguatge, literatura, computaci\'o}, 89--99.

\bibitem{Nagy2013} Benedek
Nagy (2013): \textit{On a hierarchy of $5' \rightarrow 3'$ sensing Watson-Crick finite automata languages, Journal
of Logic and Computation} 23(4), pp. 855--872, doi:\href{https://doi.org/10.1093/logcom/exr049}{10.1093/logcom/exr049}.


%
\bibitem{Eger-N} Benedek Nagy (2023): \textit{On language classes accepted by stateless $5'\to 3'$ Watson-Crick finite automata.
Annales Mathematicae et Informaticae} 58, pp. 110--120, doi:\href{https://doi.org/10.33039/ami.2023.08.004}{10.33039/ami.2023.08.004}.

%
\bibitem{RAIRO-trans} Benedek Nagy \& Zita Kov\'acs (2021): \textit{On deterministic 1-limited $5' \to 3'$ sensing Watson-Crick finite-state transducers. RAIRO Theor. Informatics Appl.} 55(5) (18 pages), doi:\href{https://doi.org/10.1051/ita/2021007}{10.1051/ita/2021007}.

\bibitem{wtl} Benedek Nagy \& Friedrich Otto (2011): \textit{Finite-State Acceptors with Translucent Letters, ICAART 2011 - 3rd International Conference on Agents and Artificial Intelligence, BILC 2011 - 1st International Workshop on AI Methods for Interdisciplinary Research in Language and Biology}, pp. 3--13, doi:\href{https://doi.org/10.5220/0003272500030013}{10.5220/0003272500030013}.

\bibitem{WKwtl} Benedek Nagy \& Friedrich Otto (2020): \textit{Linear automata with translucent letters and linear context-free trace languages. RAIRO Theor. Informatics Appl.} 54, article number 3 (23 pages), doi:\href{https://doi.org/10.1051/ita/2020002}{10.1051/ita/2020002}.

%
\bibitem{ActaInf} Benedek Nagy\& Shaghayegh Parchami (2021): \textit{On deterministic sensing $5' \to 3'$ Watson-Crick finite automata: a full hierarchy in 2detLIN, Acta Informatica} 58(3), pp. 153--175, doi:\href{https://doi.org/10.1007/s00236-019-00362-6}{10.1007/s00236-019-00362-6}.

%
\bibitem{NaCo} Benedek Nagy \& Shaghayegh Parchami (2022): $5' \to 3'$ \textit{Watson-Crick automata languages-without sensing parameter. Nat. Comput.} 21(4), pp. 679--691, doi:\href{https://doi.org/10.1007/s11047-021-09869-9}{10.1007/s11047-021-09869-9}.

\bibitem{AFL} Benedek Nagy, Shaghayegh Parchami \& Hamid-Mir-Mohammed Sadeghi (2017):
\textit{A new sensing $5'\to 3'$ Watson-Crick automata concept}. In \textit{AFL 2017:
 Proceedings 15th International Conference on Automata and Formal Languages, EPTCS} 252,
pp. 195--204, doi:\href{https://doi.org/10.4204/EPTCS.252.19}{10.4204/EPTCS.252.19}. 



%
\bibitem{UCNC} Shaghayegh Parchami, Benedek Nagy (2018): \textit{Deterministic Sensing $5' \to 3'$ Watson-Crick Automata Without Sensing Parameter}, In  Susan Stepney \& Sergey Verlan (editors): \textit{UCNC 2018: 17th International Conference on Unconventional Computation and Natural Computation, LNCS} 10867, pp. 173--187, doi:\href{https://doi.org/10.1007/978-3-319-92435-9_13}{10.1007/978-3-319-92435-9_13}.

\bibitem{Paun} Gheorghe
P\u aun, Grzegorz Rozenberg \& Arto Salomaa (2002): \textit{DNA Computing: New Computing Paradigms}.
Springer-Verlag, doi:\href{https://doi.org/10.1007/978-3-662-03563-4}{10.1007/978-3-662-03563-4}. 


\bibitem{handbook-of-NC} Grzegorz Rozenberg, Thomas B\"ack \& Joost N. Kok (2012):
\textit{Handbook of Natural Computing}. Springer, 
doi:\href{https://doi.org/10.1007/978-3-540-92910-9}{10.1007/978-3-540-92910-9}

\bibitem{Handb} Grzegorz Rozenberg \& Arto Salomaa, eds., (1997):
\textit{Handbook of Formal Languages}. Springer, doi:\href{https://doi.org/10.1007/978-3-642-59136-5}{10.1007/978-3-642-59136-5}.

\bibitem{Sempere04} Jos\'e M. Sempere (2004): \textit{A Representation Theorem for Languages Accepted by Watson-Crick
Finite Automata. Bulletin of the EATCS} 83, pp. 187--191.

\bibitem{Sempere18} Jos\'e M. Sempere (2018): \textit{On the application of Watson-Crick finite automata for the resolution
of bioinformatic problems}, In  Rudolf Freund, Michal Hospod\'ar, Galina Jir\'askov\'a \& Giovanni Pighizzini, editors: \textit{Tenth
Workshop on Non-Classical Models of Automata and Applications, NCMA} 2018, \"Osterreichische Computer Gesellschaft, 
pp. 29--30. Invited talk. 

\bibitem{Sempere}  Jos\'e M. Sempere \& P. Garc\'{\i}a (1994): \textit{A characterization of even linear languages and its application to the learning problem}. In: \textit{ICGI 1994, LNCS/LNAI} 862, pp. 38--44, doi:\href{https://doi.org/10.1007/3-540-58473-0_135}{10.1007/3-540-58473-0_135}. 


\bibitem{LocTest} Yechezkel Zalcstein (1972): \textit{Locally testable languages, Journal of Computer and System Sciences} %
6(2), pp. 151--167, doi:\href{https://doi.org/10.1016/S0022-0000(72)80020-5}{10.1016/S0022-0000(72)80020-5}.




\end{thebibliography}

\end{document}